\pgfplotsset{compat=1.18} 
\crefname{appendix}{Appendix}{Appendices}
\Crefname{appendix}{Appendix}{Appendices}
\crefname{equation}{}{}
\newtheorem{theorem}{Theorem}
\newtheorem{definition}[theorem]{Definition}
\newtheorem{lemma}[theorem]{Lemma} 
\newtheorem{corollary}[theorem]{Corollary}
\newtheorem{remark}[theorem]{Remark}
\newtheorem{proposition}[theorem]{Proposition}
\newtheorem{example}[theorem]{Example} 
\newtheorem{fact}[theorem]{Fact}  
\newcommand{\tinyspace}{\mspace{1mu}}
\newcommand{\tr}{\operatorname{Tr}}
\newcommand{\bra}[1]{\langle #1 |}
\newcommand{\ket}[1]{| #1 \rangle}
\newcommand{\braket}[2]{\langle #1 | #2 \rangle}
\newcommand\ip[2]{\ensuremath{\langle#1 , #2\rangle}} 
\newcommand\Tr{\mathop{\rm Tr}\nolimits} 
\newcommand\range{\mathop{\rm range}\nolimits}
\newcommand{\norm}[1]{\left\lVert\tinyspace#1\tinyspace\right\rVert}
\newcommand{\Ik}{\mathcal{I}_k^n}
\def\C{\mathbb{C}}
\def\Pos{\operatorname{Pos}}
\def\Herm{\operatorname{Herm}}
\newcommand{\kb}[1]{\ket{#1} \bra{#1}}
\newcommand{\problem}[1]{{\textsc{#1}}} 
\newcommand{\poly}{\mathrm{poly}}  
\newcommand{\lw}{{\problem{LearningWidth}}}
\newcommand{\mlw}{{\problem{MinLearningWidth}}} 
\newcommand{\fw}{{\problem{FactorWidth}}}  
\newcommand{\mfw}{{\problem{MinFactorWidth}}}   
\newcommand{\WOPT}{\mathrm{WeakOPT}} 
\newcommand{\WMEM}{\mathrm{WeakMEM}} 
\title{\vspace{-1cm}The complexity of perfect quantum state classification} 
\author{Nathaniel Johnston\thanks{Department of Mathematics \& Computer Science,
Mount Allison University. \\ \tt{njohnston@mta.ca},
\url{https://njohnston.ca/}}, \; 
Benjamin Lovitz\thanks{Department of Computer Science and Software Engineering,
Concordia University. \\ \tt{benjamin.lovitz@concordia.ca},
\url{https://www.benjaminlovitz.com/}}, \; 
Vincent Russo\thanks{Unitary Foundation. \\ \tt{vincent@unitary.foundation},
\url{https://vprusso.github.io/}}, \; 
Jamie Sikora\thanks{Department of Computer Science, Virginia Tech. \\ \tt{sikora@vt.edu}, \url{https://sites.google.com/site/jamiesikora/}} 
} 
\date{October 24, 2025} 
\begin{document}

\maketitle 

\begin{abstract} 
The problem of quantum state classification asks how accurately one can identify an unknown quantum state that is promised to be drawn from a known set of pure states. 
In this work, we introduce the notion of $k$-\emph{learnability}, which captures the ability to identify the correct state using at most $k$ guesses, with zero error. 
We show that deciding whether a given family of states is $k$-learnable can be solved via semidefinite programming. 
When there are $n$ states, we present polynomial-time (in $n$) algorithms for determining $k$-learnability for two cases: when $k$ is a fixed constant or the dimension of the states is a fixed constant. 
When both $k$ and the dimension of the states are part of the input, we prove that there exist succinct certificates placing the problem in NP, and we establish NP-hardness by a reduction from the classical $k$-clique problem. 
Together, our findings delineate the boundary between efficiently solvable and intractable instances of quantum state classification in the perfect (zero-error) regime. 
\end{abstract} 

%%%%%%%%%%%%%%%%%%%%%%%%%%%%%%%%%%%%%%%%%%%%%%%%%%%%%%%%

\section{Introduction}

Quantum state distinguishability is a cornerstone problem in quantum information
theory~\cite{watrous2018theory, nielsen2000quantum, wilde2013quantum, helstrom1969quantum, belavkin1975optimal, belavkin1975optimal2, deconinck2010qubit, virmani2001optimal, walgate2000local}
that addresses a fundamental question: \textit{Given a quantum system prepared in one of several known states, can we identify the given state?}
This problem has deep connections to quantum communication, cryptography, and computational
complexity~\cite{bae2015quantum, barnett2009quantum, chefles2000quantum,
bergou2010discrimination}.

In the setting of quantum state exclusion~\cite{bandyopadhyay2014conclusive}, the goal is to determine which state the system is \emph{not} in, rather than which state it \emph{is}. 
Perfect quantum state exclusion (also called antidistinguishability~\cite{caves2002conditions, havlivcek2020simple, heinosaari2018antidistinguishability, mishra2023optimal, russo2023inner,
johnston2025tight}) of a state from a known set is possible when there exists a measurement that perfectly \emph{rules out} a state that was sent.   
This exclusion perspective has proven fruitful for understanding nonlocal properties of quantum states~\cite{leifer2014quantum, leifer2020noncontextuality, pusey2012reality, molina2019povms} and has applications in quantum cryptographic protocols~\cite{crickmore2020unambiguous}. 
One noteworthy point (that we leverage later in this paper) is that the notion of antidistinguishability and the property of $k$-\emph{incoherence} of a matrix~\cite{ringbauer2018certification} are intricately
linked~\cite{johnston2025tight}.

This work explores intermediate scenarios between perfect distinguishability and perfect exclusion. 
Rather than asking whether we can identify or rule out states with certainty, we ask: \textit{How many guesses are
needed to guarantee correct identification?} 
This perspective naturally leads to a new task which we call $k$-\emph{learnability}: an ensemble of states is $k$-\emph{learnable} if there exists a measurement that, upon obtaining an outcome, allows us to narrow down the identity of the unknown state to at most $k$ possibilities, with the guarantee that one of these $k$ guesses is correct.

The $k$-learnability problem has natural connections to quantum machine learning~\cite{schuld2021machine, wittek2014quantum},  quantum learning theory~\cite{arunachalam2017guest, anshu2024survey}, and tomography~\cite{james2001measurement, gross2010quantum}.  
In these fields, the problem of learning quantum states from measurement data has received considerable attention, with complexity measures often tied to the number of measurements or samples required. The $k$-learnability framework reinterprets learnability: it measures the number of distinct candidate states that must be considered, rather than the number of identical samples required.

%%%%%%%%%

\subsection{Perfectly distinguishability and learning width}   

An indexed list of (pure) states $\{ \ket{\psi_1}, \ldots, \ket{\psi_n} \}$ is called \emph{perfectly distinguishable} if there exists a POVM that, when applied to an unknown state drawn from the set, always perfectly guesses the index (that is, without error).  
It is well-known that a set of quantum states is perfectly distinguishable if and only if the states are pairwise orthogonal. 
However, if we relax this definition, one can, perhaps surprisingly, learn other aspects of this set without error, even when the states have non-trivial overlap. 

The task of perfect state exclusion (also called antidistinguishability) asks when it is possible to perfectly \textit{exclude} an index. 
A popular example where this can be achieved is the so-called trine states~\cite{holevo1973information, peres1991optimal} 
\begin{equation} 
    \ket{\psi_1} = \ket{0}, 
    \quad 
    \ket{\psi_2} = -\frac{1}{2} \left( \ket{0} + \sqrt{3} \ket{1} \right), 
    \quad 
    \ket{\psi_3} = -\frac{1}{2} \left( \ket{0} - \sqrt{3} \ket{1} \right).   
\end{equation}  
Optimal bounds for this task are known~\cite{bandyopadhyay2014conclusive,
johnston2025tight} noting that one need not require pairwise orthogonality for a
set of states to be antidistinguishable.  

To illustrate $k$-learnability, consider the following generalization of the
trine states to four states in three dimensions, called the \emph{tetrahedral
states}: 
\begin{equation}
    \begin{aligned}
        \ket{\psi_1} = \frac{1}{\sqrt{3}} (\ket{0} + \ket{1} + \ket{2}), \quad & 
        \ket{\psi_2} = \frac{1}{\sqrt{3}} (\ket{0} - \ket{1} - \ket{2}), \\ 
        \ket{\psi_3} = \frac{1}{\sqrt{3}} (-\ket{0} - \ket{1} + \ket{2}), \quad & 
        \ket{\psi_4} = \frac{1}{\sqrt{3}} (-\ket{0} + \ket{1} - \ket{2}). 
    \end{aligned}
\end{equation}

These states are $2$-\textit{learnable}: upon receiving one of them, one can always guess two states from which it was selected without error. 
This can be accomplished using the POVM $M_{i,j} = \frac{1}{2}
\kb{\phi_{i,j}}$, where
\begin{equation}
    \begin{aligned}
    \ket{\phi_{1,2}} &= \frac{1}{\sqrt{2}}(\ket{1} + \ket{2}), \quad
    \ket{\phi_{1,3}} = \frac{1}{\sqrt{2}}(\ket{0} + \ket{1}), \quad
    \ket{\phi_{1,4}} = \frac{1}{\sqrt{2}}(\ket{0} + \ket{2}), \\
    \ket{\phi_{2,3}} &= \frac{1}{\sqrt{2}}(\ket{0} - \ket{2}), \quad
    \ket{\phi_{2,4}} = \frac{1}{\sqrt{2}}(\ket{0} - \ket{1}), \quad
    \ket{\phi_{3,4}} = \frac{1}{\sqrt{2}}(\ket{1} - \ket{2}).
    \end{aligned}
\end{equation} 
We now formally define the notion of $k$-learnability. 

\begin{definition}\label{defn:k_learn} 
    Let $k$ be a positive integer. 
    An indexed list of (possibly sub-normalized) pure quantum states $\{ \ket{\psi_1}, \ldots, \ket{\psi_n} \} \subset \C^d$ is $k$-\emph{learnable} if there exists a POVM 
    \begin{equation} 
    \{ M_S : S \subseteq \{ 1, \ldots, n \}, |S| = k\}, 
    \end{equation} 
    such that $\bra{\psi_i} M_S \ket{\psi_i} = 0$ whenever $i \notin S$. 
    The \emph{learning width} of the list is the smallest $k$ such that it is $k$-learnable. 
\end{definition} 

\begin{remark} 
We note that later, when we discuss the decision problem of determining the $k$-learnability of pure states, we need to define \emph{approximate} $k$-learnability. 
For this, it is convenient to define the $k$-learnability of \emph{subnormalized pure states}. 
Notice that as long as the states have positive norm, the task of determining if they are $k$-learnable is the same; it is only a matter of finding a POVM with the correct indices and orthogonality conditions. 
\end{remark} 

Note that $1$-learnability is equivalent to perfect distinguishability and $(n-1)$-learnability is equivalent to antidistinguishability. A set has learning width $1$ if and only if the states are pair-wise orthogonal. 
Every set has learning width at most $n$, trivially, and a set is antidistinguishable if and only if it has learning width at most $n-1$. 
The trine states and the tetrahedral states both have learning width $2$ since they are $2$-learnable but not $1$-learnable. 

We can trivially construct indexed lists of pure states that are $k$-learnable for any $n\geq k$ in the following way. 
Consider the indexed list of states 
\begin{equation} \label{ex:funny}
\{ \underbrace{\ket{0}, \ket{0}, \ldots, \ket{0}}_{k \text{ copies}}, \underbrace{\ket{1}, \ket{1}, \ldots, \ket{1}}_{k \text{ copies}}, \ket{2}, \ket{2}, \ldots \}. 
\end{equation} 
Measuring in the computational basis will reveal which $k$-subset the states were in. 
While this example is rather straightforward, we note that it also illustrates why we use the term \emph{indexed list} of states instead of \emph{set}, as we are guessing the index, not the state. 
Indeed, with the states in Equation~\cref{ex:funny}, one can always guess the state, but not necessarily the index. 

%%%%%%%%%%%%%%%%%%%%%%%%%%%%%%%%%%%%%%%%%%%%%%%%%%%%%%%%%%%%%%

\subsection{Our results} 

We establish both positive and negative results concerning the computational complexity of deciding $k$-learnability. Specifically, we show that when certain parameters are fixed, the problem admits efficient algorithms: the $k$-learnability of a set of $n$ states can be decided in $\operatorname{poly}(n)$ time when either the number of guesses $k$ is fixed (\cref{thm:informal_k_fixed}) or when the dimension of the underlying Hilbert space is fixed (\cref{thm:learning_easy_small_dim_informal}). 
In these regimes, the structure of the problem allows reductions to semidefinite programs of bounded size, which can be solved efficiently using standard approximation techniques. 
In contrast, when both $k$ and the dimension are treated as part of the input, the problem becomes substantially harder: we prove that it is NP-complete by establishing membership in NP and constructing a reduction from the classical $k$-clique problem (\cref{thm:NP_informal}, \cref{thm:nphard_informal}). 
Together, these results delineate the boundary between the tractable and intractable instances of quantum state classification. 
In what follows, we state these results informally 
and  provide intuition for the underlying proofs. 

The starting point for our proofs is the observation that an indexed list of states $\{ \ket{\psi_1}, \ldots, \ket{\psi_n} \}$ is $k$-learnable if and only if its associated Gram matrix $G_{ij} = \braket{\psi_i}{\psi_j}$ possesses a particular structural property called $k$-\emph{incoherence}~\cite{ringbauer2018certification}. 
Intuitively, this means that the Gram matrix can be expressed as a convex combination of outer products of vectors that are supported on at most $k$ coordinates.  
In the following sections (see~\cref{ssect:gram}), we formalize this correspondence and show how it serves as the foundation for both our algorithmic and hardness results. 

\begin{theorem}[Informal, see~\cref{sect:ellipsoid}]\label{thm:informal_k_fixed}
Given an indexed list of (possibly sub-normalized) pure quantum states $\{ \ket{\psi_1}, \ldots, \ket{\psi_n} \} \subset \C^n$, 
there exists a polynomial-time algorithm (in $n$) to test if the list is $k$-learnable when $k$ is a fixed constant. 
\end{theorem}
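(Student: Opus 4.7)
My plan is to reduce the decision problem to a semidefinite feasibility problem whose size is polynomial in $n$ when $k$ is held fixed, and then solve it with the ellipsoid method.

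First, I would invoke the Gram matrix characterization developed in Section~\ref{ssect:gram}: the list $\{\ket{\psi_1},\ldots,\ket{\psi_n}\}$ is $k$-learnable if and only if its $n \times n$ Gram matrix $G$, given by $G_{ij}=\braket{\psi_i}{\psi_j}$, is $k$-incoherent. This replaces a question about measurements on a $d$-dimensional Hilbert space by a structural question about a matrix of size depending only on $n$.

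Next, I would express $k$-incoherence as an SDP feasibility problem. By grouping rank-one terms in any $k$-incoherent decomposition according to their support, $G$ is $k$-incoherent if and only if there exist PSD matrices $\{P_S\}$, indexed by $k$-element subsets $S \subseteq \{1,\ldots,n\}$ and supported on the rows and columns labelled by $S$, satisfying $\sum_{S} P_S = G$. There are $\binom{n}{k}=O(n^k)$ such subsets, each $P_S$ carries at most $k^2$ entries, and the equality $\sum_S P_S = G$ imposes $O(n^2)$ linear constraints. For fixed $k$ this SDP has polynomially many scalar variables and constraints, so in principle the ellipsoid method (with the standard polynomial-time separation oracle for the PSD cone) solves the feasibility problem in time polynomial in $n$.

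The main obstacle is the standard exact-versus-approximate gap: the ellipsoid method only returns approximately feasible solutions, whereas $k$-learnability is an exact condition. This is precisely what motivates the notion of \emph{approximate} $k$-learnability flagged in the remark, and it is why the states are allowed to be sub-normalized (so that the feasible region is robust to small perturbations). To close the gap rigorously I would need to bound the diameter of the feasible set, and exhibit a strict-feasibility / well-conditioning property, so that the weak-optimization framework of Grötschel, Lovász, and Schrijver turns an $\varepsilon$-feasibility certificate from the SDP into a correct verdict on (approximate) $k$-learnability. Carrying out these quantitative bounds and connecting them back to Definition~\ref{defn:k_learn} will be the technical heart of the argument.
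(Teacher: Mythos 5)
Your high-level route is the same as the paper's: pass to the Gram matrix via Lemma~\ref{lem:obvious}, note that a $k$-incoherent decomposition can be grouped into $\binom{n}{k}=O(n^k)$ PSD blocks $W_S$ supported on $k$-subsets $S$ with $\sum_S W_S$ equal to (a rescaling of) $G$, and observe that for fixed $k$ this is a polynomial-size SDP amenable to the ellipsoid method. That part matches Section~\ref{sect:ellipsoid} exactly.

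The step you defer as ``the technical heart'' is, however, where your plan as stated would fail. You propose to run the ellipsoid method on the \emph{feasibility} system $\{\sum_S P_S = G,\ P_S \succeq 0\}$ and to rescue exactness via a ``strict-feasibility / well-conditioning property.'' No such property holds in general: the feasible set lies in an affine subspace (so it has empty interior in the ambient variable space), and, more fundamentally, exact $k$-learnability is not a robust property --- $\frac{1}{n}G$ can lie exactly on the boundary of $\Ik$, in which case no perturbation bound can convert an $\varepsilon$-approximate certificate into an exact verdict. The paper avoids this in two ways. First, the formal statement is not about exact $k$-learnability but about the promise problem $\WMEM(\Ik,\delta)$ with $\delta = 1/\operatorname{poly}(n)$ (this is why Section~\ref{sect:dec} sets everything up as weak membership, and why sub-normalized states are allowed). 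Second, instead of a feasibility test it solves the \emph{optimization} problem $\alpha = \min\{\|X-Y\|_F^2 : Y \in \Ik\}$ for $X = \frac{1}{n}G$, encoded via a Schur-complement block together with the $W_S$ variables and the trace bounds $\tr(Y)\le 1$, $\tr(Z)\le(\delta+2)^2$; approximating $\alpha$ to additive precision $\delta^2/3$ cleanly separates the case $X\in S(\Ik,-\delta)$ (where $\alpha=0$) from $X\notin S(\Ik,\delta)$ (where $\alpha>\delta^2$). If you replace your feasibility test with this distance-minimization SDP and target the promise version of the problem, your argument closes; as written, the exact decision you aim for is not what the theorem (formally) asserts, and is not achievable by conditioning arguments alone.
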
 

The proof proceeds by stating the problem as a semidefinite program in which there are polynomially-bounded variable size and also polynomially-bounded number of constraints. 
Then using standard algorithms for efficiently approximating semidefinite programs, e.g., the Ellipsoid Method~\cite{grotschel2012geometric}, this gives the result. 

The next theorem gives another regime where we can efficiently test $k$-learnability.

\begin{theorem}[Informal, see~\cref{cor:learning_easy_small_dim}]\label{thm:learning_easy_small_dim_informal}
Given an indexed list of (possibly sub-normalized) pure quantum states $\{ \ket{\psi_1}, \ldots, \ket{\psi_n} \} \subset \C^d$, 
there exists a polynomial-time algorithm to test if the list is $k$-learnable when $k$ is now part of the input, but the dimension $d$ is fixed. 
\end{theorem}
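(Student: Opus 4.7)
The plan is to reduce $k$-learnability to a semidefinite feasibility program of size $n^{O(d)}$ and then appeal to standard SDP solvers in the style of the proof of \cref{thm:informal_k_fixed}. Working directly from \cref{defn:k_learn}, a POVM element $M_S$ with $|S|=k$ satisfies $\langle \psi_i | M_S | \psi_i \rangle = 0$ for every $i \notin S$ if and only if $M_S$ is supported on the subspace $W_S := (\operatorname{span}\{|\psi_i\rangle : i \notin S\})^\perp \subseteq \C^d$. Crucially, $W_S$ depends on $S$ only through its orthogonal complement $V_S := W_S^\perp$, which is always a flat of the linear matroid on $\{|\psi_1\rangle,\ldots,|\psi_n\rangle\}$, of rank at most $d$. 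Since every such flat is the span of some independent subset of size at most $d$, the total number of flats is at most $\sum_{r=0}^{d}\binom{n}{r}=O(n^d)$, and they can be enumerated explicitly in polynomial time by iterating over independent subsets and computing their spans.

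I would then introduce, for each flat $V$ that is \emph{realizable} as some $V_S$ (that is, $\operatorname{rank}(V) \leq n-k$ and $|\{i : |\psi_i\rangle \in V\}| \geq n-k$, so that an $(n-k)$-subset of states spanning $V$ exists), one PSD variable $N_V\in\Pos(\C^d)$ constrained to satisfy $\im(N_V)\subseteq V^\perp$, together with the single matrix equality $\sum_V N_V = I_d$. This SDP has at most $O(n^d)$ PSD variables of size at most $d\times d$ and only $O(d^2)$ linear equality constraints, so it has polynomial size in $n$ (and in the bit-length of the input) once $d$ is fixed, and the Ellipsoid Method of \cref{thm:informal_k_fixed} decides its feasibility in $\poly(n)$ time.

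The key step to verify is the equivalence of this SDP's feasibility with $k$-learnability. In the forward direction, any valid POVM $\{M_S\}$ yields a feasible solution via $N_V := \sum_{S:\,V_S = V} M_S$, which is PSD, supported in $V^\perp$, and sums to $I_d$. In the converse direction, given a feasible $\{N_V\}$, one selects for each realizable flat $V$ a single representative $k$-subset $S(V)$ with $V_{S(V)}=V$ (guaranteed by the realizability conditions, and computable greedily in polynomial time), sets $M_{S(V)}:=N_V$, and sets all remaining POVM elements to zero. The main obstacle is the matroid-theoretic bookkeeping: one must carefully bound the number of relevant subspaces by $n^{O(d)}$ and ensure that the representative subsets $S(V)$ can always be constructed. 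The standard approximation-versus-exactness issues for SDP feasibility are handled exactly as in \cref{thm:informal_k_fixed}, via a bit-complexity argument on the size of the input and the inverse of any feasibility gap.
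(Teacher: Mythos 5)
Your proposal is correct and is essentially the paper's own argument: the paper (Theorem~\ref{thm:low_rank_easy}) likewise observes that only $O(n^d)$ distinct subspaces can arise, enumerates them in polynomial time, and reduces $k$-learnability to a polynomial-sized SDP with one PSD block supported on each such subspace. The only difference is cosmetic --- you work with spans of subsets of the states in $\C^d$ and their orthogonal complements on the POVM side, whereas the paper works with the range of the Gram matrix in $\C^n$ intersected with coordinate hyperplanes on the $k$-incoherence side; under the standard correspondence between a list of vectors and its Gram matrix these are the same subspaces and the same counting argument.
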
 

Translating to $k$-incoherence, this result says that it can be determined in polynomial time whether the (constant) rank $d$ Gram matrix is $k$-incoherent. This is clear when $d=1$ by extremality of pure states. In general, one can formulate the question of $k$-incoherence as a semidefinite program in $\sim\binom{n}{k}$ matrix variables, which could be exponential in $n$, e.g., if $k \approx n/2$. The crux of the proof is the observation that for any constant rank, one can efficiently discard all but $\operatorname{poly}(n)$ of these matrix variables.

When the number of allowed guesses $k$ is treated as part of the input, and the Hilbert space dimension is also allowed to scale with the problem size, the situation changes dramatically. 
In this more general setting, the combinatorial structure of the $k$-learnability condition becomes significantly richer, and the semidefinite formulations that are efficient in the fixed-parameter cases no longer suffice to yield a polynomial-time algorithm. 
Intuitively, as both $k$ and the dimension grow, the space of possible candidate decompositions of the Gram matrix expands exponentially, making it (seemingly) computationally infeasible to solve via convex optimization. 
In fact, we show that the problem is unlikely to be tractable by showing that is it NP-complete. 
This is established through two complementary results, one demonstrating that $k$-learnability lies in NP via succinct proofs, and another proving NP-hardness by reduction from the classical $k$-clique problem. 
Together, these theorems provide a tight characterization of the problem's computational difficulty in the general case. 

\begin{theorem}[Informal, see~\cref{sect:inNP}]\label{thm:NP_informal}
The decision variant of $k$-learnability is in NP. 
\end{theorem}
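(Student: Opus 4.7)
The plan is to exhibit a polynomial-size witness of $k$-learnability that can be verified in polynomial time. By the Gram matrix correspondence summarized in the introduction, an indexed list $\{\ket{\psi_1},\ldots,\ket{\psi_n}\}$ is $k$-learnable if and only if its Gram matrix $G_{ij} = \braket{\psi_i}{\psi_j}$ is $k$-incoherent, meaning that it admits a decomposition $G = \sum_j v_j v_j^*$ with each $v_j \in \C^n$ supported on at most $k$ coordinates. A natural NP certificate is therefore a list of such vectors together with the $k$-subsets indicating their supports.

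To bound the length of the certificate I would invoke Carath\'eodory's theorem. Each term $v_j v_j^*$ lies in the real vector space $\Herm_n(\C)$, which has real dimension $n^2$, so the cone of $k$-incoherent matrices sits inside $\Herm_n(\C)$, and whenever $G$ lies in this cone it can be written as a non-negative combination of at most $n^2$ of its rank-one generators. It therefore suffices to provide at most $n^2$ vectors, each specified by a $k$-subset of $\{1,\ldots,n\}$ and up to $k$ complex entries, for a total of $O(n^2 k)$ numbers plus $O(n^2)$ supports, which is polynomial in the input size.

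The verifier receives $\{(S_j, v_j)\}$, checks that $|S_j| = k$ and that $v_j$ is supported on $S_j$ for each $j$, then forms $\sum_j v_j v_j^*$ and compares it to $G$: equality in the exact problem, and distance below the specified tolerance in the approximate variant foreshadowed in the remark following \cref{defn:k_learn}. Each of these checks runs in time polynomial in $n$ and in the bit-length of the input.

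The main obstacle is bit-complexity: a priori the entries of the $v_j$ may be irrational or of super-polynomial bit-length, and general semidefinite feasibility is not known to admit short rational certificates. I would address this by working with the approximate decision problem, in which a tolerance $\epsilon$ with $\log(1/\epsilon) \le \poly(n)$ is part of the input. Standard bit-complexity results for rational semidefinite programming (Gr\"otschel--Lov\'asz--Schrijver) then guarantee that whenever $G$ is $\epsilon$-close to being $k$-incoherent, some witness of the above form exists with entries of polynomially bounded bit-length. Combining this with the Carath\'eodory bound produces the desired succinct, polynomial-time-verifiable certificate and places the problem in NP.
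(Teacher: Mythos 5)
Your proposal follows essentially the same route as the paper's Section~\ref{sect:inNP}: pass to the Gram matrix and $k$-incoherence, apply Carath\'eodory's theorem in $\Herm(\C^n)$ (real dimension $n^2$) together with the characterization of the extreme points of $\Ik$ to obtain a certificate of at most $n^2+1$ rank-one terms each supported on $k$ coordinates, and verify the support conditions, the sum, and the trace in polynomial time. Your extra paragraph on bit-complexity via the weak (approximate) membership formulation is a sensible refinement of a point the paper leaves implicit, but it does not change the argument.
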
 

We introduce the decision variant of the problem in~\cref{sect:dec}.  

Notice that a proof of $k$-learnability could be to simply non-deterministically guess the POVM operators in the learning POVM. 
However, there are $\binom{n}{k}$ such operators, which is exponentially large when $k \approx n/2$ (as seen by Stirling's approximation, for example), thus a different method is required. 
To do this, we reformulate the problem in terms of $k$-incoherence, and use the geometry of that set along with Carath\'{e}odory's theorem. 
This shows the existence of succinct proofs. 

\begin{theorem}[Informal, see~\cref{thm:nphard}]\label{thm:nphard_informal}
The decision variant of $k$-learnability is NP-hard. 
\end{theorem}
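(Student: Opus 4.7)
The plan is to establish NP-hardness via a polynomial-time Karp reduction from the classical $k$-\textsc{clique} problem to the decision variant of $k$-learnability. Given an instance $(H, k)$ of $k$-\textsc{clique}, where $H=(V,E)$ is a graph on $n$ vertices, the goal is to construct in time polynomial in $n$ an indexed list of (possibly sub-normalized) pure states $\mathcal{L}_H$ together with a parameter $k'$ (bounded by a polynomial in $n$ and $k$) such that $\mathcal{L}_H$ is $k'$-learnable if and only if $H$ contains a clique of size $k$. Combined with the NP-membership result stated in \cref{thm:NP_informal}, this immediately upgrades the problem to NP-completeness.

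The starting point is the Gram-matrix reformulation from \cref{ssect:gram}: $\mathcal{L}_H$ is $k'$-learnable iff its Gram matrix $G$ is $k'$-incoherent, i.e., admits a decomposition $G = \sum_{t} \ketbra{v_t}{v_t}$ in which each $v_t$ is supported on at most $k'$ coordinates. Designing $\mathcal{L}_H$ is therefore the same as designing a PSD matrix $G$ whose sparse rank-one decomposability encodes the clique structure of $H$. Concretely, one chooses $G$ so that its zero pattern follows the non-edges of $H$ on the ``vertex block'' of the index set, with additional auxiliary rows and columns fixing the diagonal mass and anchoring the positions of the sparse supports. The \emph{if} direction is then the easy one: starting from a $k$-clique $C \subseteq V$, one exhibits an explicit $k'$-incoherent decomposition of $G$ using a single sparse rank-one summand supported on $C$ together with simple ``padding'' summands that account for the remaining mass of $G$.

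The main obstacle is the \emph{only if} direction: converting an arbitrary $k'$-incoherent decomposition of $G$ into a $k$-clique of $H$. The subtlety is that a decomposition $G = \sum_{t} \ketbra{v_t}{v_t}$ need not respect the zero pattern of $G$ term by term, since cancellation across summands is permitted at non-edge coordinates; consequently the supports of the $v_t$ are not automatically cliques of $H$. To rule this out, I would augment $\mathcal{L}_H$ with auxiliary pure states whose own $k'$-learnability constraints pin the off-diagonal entries of each $\ketbra{v_t}{v_t}$ at non-edge positions to vanish individually (for instance, via block structure that forces each $v_t$ to lie inside a distinguished subspace on which cancellation is impossible). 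Once cancellation is precluded, the support of any sparse rank-one term with $k$ nonzero coordinates is forced to be a $k$-clique in $H$, and conversely, the absence of such a term obstructs the decomposition, yielding the desired equivalence. The construction will have to be calibrated so that the auxiliary states do not themselves introduce spurious learnability beyond what the clique structure of $H$ provides, which is where the polynomial-sized parameter $k'$ and the precise choice of padding matter most.
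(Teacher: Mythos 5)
Your proposal takes a genuinely different route from the paper, but it contains a gap at precisely the step you yourself identify as ``the main obstacle,'' and that gap is not closed. The paper does \emph{not} attempt a direct Karp reduction that encodes the graph into the zero pattern of a Gram matrix. Instead it routes through weak \emph{optimization}: the adjacency matrix $C$ of $H$ appears as the \emph{objective} of a linear optimization $\mu(k,C)=\max\{\ip{C}{X}: X\in\Ik\}$, the extreme points of $\Ik$ are characterized (Lemma~\ref{lem:ex}) so that $\mu(k,C)$ equals the largest eigenvalue of a $\leq k\times k$ principal submatrix of $C$ (Corollary~\ref{Batman}), and spectral graph theory (the bound $\lambda_{\max}\leq\sqrt{2f(k-1)/k}$ with equality only for the complete graph) yields a $1/\operatorname{poly}$ gap between the clique and no-clique cases (Proposition~\ref{prop:graph_clique}). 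Hardness of the membership problem then follows from the general oracle reduction of $\WOPT$ to $\WMEM$ (Fact~\ref{fact}), which requires exhibiting inner and outer balls for $\Ik$ with polynomially related radii (Lemma~\ref{lem:abs_k_incoh_ball_here}). This architecture entirely sidesteps the cancellation problem you describe, because no term-by-term support analysis of a decomposition of a \emph{given} matrix is ever needed.

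The concrete gap in your argument is twofold. First, your ``only if'' direction rests on unconstructed gadgetry: you assert that auxiliary states can ``pin the off-diagonal entries of each $\ketbra{v_t}{v_t}$ at non-edge positions to vanish individually,'' but you give no mechanism, and it is not clear one exists --- a $k$-incoherent decomposition is free to cancel across summands, and adding more states (hence more rows and columns to $G$) enlarges the space of admissible supports rather than restricting the summands you already have. Without an explicit construction and a proof that cancellation is impossible, the reduction does not go through. Second, the decision variant in this paper is a \emph{weak} membership problem with precision $\delta$: YES instances must lie $\delta$-deep in $\Ik$ and NO instances $\delta$-outside. Your construction hinges on exact zero entries and exact support conditions, which are not stable under $\delta$-perturbation; you would additionally need to show that the clique/no-clique dichotomy survives as a $1/\operatorname{poly}(n)$-robust separation, which is the role played by the explicit $\gamma\pm\delta$ estimates in Appendix~\ref{app:hardness}. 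As written, your proposal is a plausible research direction rather than a proof.
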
 

We prove this by a series of reductions. First, we prove a formal equivalence to the decision variant of $k$-incoherence (see~\cref{thm:reducible}). We then prove a reduction from optimization over $k$-incoherent matrices, and finally a reduction from this optimization problem to the $k$-\problem{Clique} problem for graphs, which is NP-hard.

\paragraph{Author note.} 
\cref{thm:nphard_informal} was also obtained independently and concurrently in~\cite{harrow2025randomized}. We note that their work focused on $k$-incoherence, and they did not observe the equivalence to state classification proven in the present work. 

%%%%%%%%%%

\subsection{Structure of the paper} 

In the next section, we establish a fundamental connection between the $k$-learnability of quantum states and the notion of $k$-incoherence of their Gram matrix. This provides the analytical foundation for studying the classification problem.~\cref{sect:dec} formalizes both the decision and optimization variants of learnability by framing them as weak membership problems over a suitably defined convex set. 
We also introduce the weak optimization version of this problem and, in~\cref{sec:hardness-proofs}, present rigorous proofs showing that it can be used to solve the $k$-clique problem. 
The decision problem is shown to lie in NP in~\cref{sect:inNP}, through an analysis of the geometry of the set of $k$-incoherent matrices. Efficient algorithms for several special cases are then discussed in~\cref{sect:ellipsoid} and~\cref{sec:low_rank}. 
Finally, we provide links to open-source software that implements the methods developed in this work.

%%%%%%%%%%

\section{How to tell if a set of states is \texorpdfstring{$k$}{k}-learnable?}\label{sec:how}

The main focus of this work is to study when a given set of states is $k$-learnable. 
We address this from both an analytical and computational complexity point of view.  

To start, we model this as a semidefinite program (SDP). 
Suppose we are given an indexed list of states $\{\ket{\psi_1}, \ldots, \ket{\psi_n} \}\subset \mathbb{C}^d$ 
and a POVM $\{ M_S: S \subseteq \{ 1, \ldots, n \}, |S| = k \}$ which is intended to classify a given state into a subset of size $k$. 
Then, the average error this POVM makes is equal to the optimal objective function value of the following SDP  
    \begin{equation} \label{SDP_OG}
        \min\left\{ \frac{1}{n} \sum_{i=1}^n \bra{\psi_i} \left(\sum_{S \, : \, i \notin S} M_S\right) \ket{\psi_i} : M_S \in \Pos(\C^d), \sum_{\substack{S \subseteq \{ 1, \ldots, n \}\\ |S| = k}} M_S = I \right\} 
    \end{equation} 
where we denote the set of $n \times n$ positive semidefinite (resp. Hermitian) matrices as $\Pos(\C^n)$ (resp. $\Herm(\C^n)$). 
(Note that the minimum is attained since we are optimizing a continuous function over a compact set.) 
Thus, to determine if a list is $k$-learnable, one needs only to determine if the optimal value of the above SDP is $0$ or not. 
However, special care must be taken when posing this question in the context of finite-precision algorithms. 

When one examines the SDP above, there are a few things that are noteworthy when considering how difficult it is to solve. 
Often, one can efficiently approximate semidefinite programs with respect to their variable size and number of constraints, but the above SDP has $\binom{n}{k}$ matrix variables, each of size $n \times n$, which could be exponential when, e.g., $k \approx n/2$, as seen by Stirling's approximation.  
This suggests that this semidefinite program could be \emph{hard} to approximate with respect to $n$ and $k$ in general.  

%%%%%%%%%%% 

\subsection{Gram matrices and factor width} 
\label{ssect:gram}

It turns out that determining whether states are $k$-learnable reduces to asking whether certain matrices can be decomposed as sums of positive semidefinite matrices with restricted support.  
To see this, consider the following SDP
\begin{equation} \label{SDP_red} 
\min \left\{ \sum_{i=1}^n \bra{i} \sum_{S: i \notin S} W_S \ket{i} : W_S \in \Pos(\C^n), \; \sum_{S \subseteq \{ 1, \ldots, n \}, |S| = k} W_S = \frac{1}{n} G \right\} 
\end{equation} 
where $G$ is the Gram matrix, i.e., $G_{i,j} = \braket{\psi_i}{\psi_j}$. 

This SDP ends up being useful to state the decision problem of classification, as seen in the following Lemma (proved in~\cref{sdp_proof}). 

\begin{lemma} \label{lem:SDP_same} 
Let $k$ be a positive integer and let $\{ \ket{\psi_1}, \ldots, \ket{\psi_n} \} \subset \C^d$ be an indexed list of (possibly sub-normalized) pure quantum states. 
Then SDP~\cref{SDP_OG} and SDP~\cref{SDP_red} have the same optimal value. 
\end{lemma}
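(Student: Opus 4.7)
The plan is to exhibit a two-way correspondence between feasible solutions of SDP~\cref{SDP_OG} and SDP~\cref{SDP_red} that preserves the objective value, so that the two optima must coincide. First, I would reduce to the case where the states span $\C^d$: if they lie in a proper subspace $\W \subset \C^d$, conjugating each $M_S$ by the projection onto $\W$ leaves every $\bra{\psi_i} M_S \ket{\psi_i}$ and the POVM condition unchanged (when viewed as acting on $\W$), so no generality is lost. Let $A$ denote the $d \times n$ matrix whose $i$-th column is $\ket{\psi_i}$, so that $G = A^* A$ and $A$ now has rank $d$.

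Going from SDP~\cref{SDP_OG} to SDP~\cref{SDP_red} is the easy direction: I would set $W_S := \tfrac{1}{n} A^* M_S A$. Positive semidefiniteness of each $W_S$ and the constraint $\sum_S W_S = \tfrac{1}{n} G$ follow immediately from $\sum_S M_S = I_d$, and the identity $(W_S)_{ii} = \tfrac{1}{n} \bra{\psi_i} M_S \ket{\psi_i}$ matches the two objective functions termwise.

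The main step is inverting this correspondence. Given feasible $\{W_S\}$, I would define
\begin{equation*}
M_S := n \, A \, G^+ W_S \, G^+ A^*,
\end{equation*}
where $G^+$ is the Moore--Penrose pseudoinverse. Positivity is clear. Two facts drive the rest: (i) because $\sum_T W_T = \tfrac{1}{n} G$ and each $W_T$ is PSD, each $W_S$ satisfies $\range(W_S) \subseteq \range(G)$, hence $P W_S P = W_S$ for the orthogonal projector $P := G^+ G$ onto $\range(G)$; and (ii) since $A$ has rank $d$, a short SVD computation gives $A G^+ A^* = I_d$. Fact (ii) immediately yields $\sum_S M_S = A G^+ G G^+ A^* = A G^+ A^* = I_d$, so $\{M_S\}$ is a valid POVM, and combining (i) and (ii) gives $\tfrac{1}{n} \bra{\psi_i} M_S \ket{\psi_i} = (A^* A G^+ W_S G^+ A^* A)_{ii} = (P W_S P)_{ii} = (W_S)_{ii}$, matching the objectives once more.

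The only real subtlety I anticipate is the identity $A G^+ A^* = I_d$, which fails unless $A$ has full row rank; this is precisely why the initial reduction to the span-$\C^d$ setting is essential. Once that reduction is in place, everything else is routine pseudoinverse bookkeeping.
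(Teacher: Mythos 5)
Your proof is correct, but it takes a genuinely different and more self-contained route than the paper's. The paper first normalizes the states (introducing auxiliary orthonormal states $\ket{\phi_i}$ for any zero-norm vectors and a normalized Gram matrix $\tilde{G}$), then invokes a cited result on reward-based discrimination SDPs to pass from the POVM formulation to a Gram-matrix decomposition, and finally translates between $\tilde{G}$ and $G$ via a diagonal matrix $D$ of norms and its Moore--Penrose inverse. You instead build the feasible-solution correspondence directly: $M_S \mapsto \tfrac{1}{n}A^*M_S A$ in one direction and $W_S \mapsto n\,A G^+ W_S G^+ A^*$ in the other, with the two key facts being $A G^+ A^* = I$ (valid after restricting to the span of the states, which you correctly flag as the essential reduction) and $\range(W_S) \subseteq \range(G)$ (which follows from $0 \leq W_S \leq \tfrac{1}{n}G$, so $P W_S P = W_S$). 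This buys you a proof that needs no external citation and handles sub-normalized and zero states uniformly through the rank reduction rather than through the paper's $I_T$ correction term; the paper's route, in exchange, situates the lemma within a known general framework for state-discrimination SDPs. One minor point worth a sentence in a final write-up: after projecting onto the span $\W$, the operators $\Pi M_S \Pi$ sum to $\Pi$ rather than $I_d$, so you should note that adding $I_d - \Pi$ to any one element restores a bona fide POVM on $\C^d$ without changing the objective (and dispose of the trivial case where all states vanish).
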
 

Note that we are ultimately interested in whether either SDP has an optimal value $0$. 
In this case, we have $\bra{i} W_S \ket{i} = 0$ for each $i$ and each $S$ such that $i \notin S$. 
For such a pair $(i, S)$, this means that the $(i, i)$-th diagonal entry of $W_S$ equals $0$ and since $W_S$ is positive semidefinite, the entire $i$-th row and column must be $0$. 
Since this holds for all $i \notin S$, there must be $n-k$ rows and columns that are all $0$. 
This aligns exactly with the notion of generalized incoherences, discussed next.  

In the theory of quantum resources, $k$-incoherence~\cite{johnston2022absolutely, levi2014quantitative,ringbauer2018certification, sperling2015convex} plays the role of quantifying how much superposition is present in a (mixed) quantum state. 
This particular interpretation of it is not relevant for our purposes, so we just recall its mathematical definition. 

\begin{definition} \label{defn:k_incoh} 
Let $k$ be a positive integer. 
A positive semidefinite matrix $X$ is called $k$-\emph{incoherent} if there exists a positive integer $m$ and a set of vectors $v_i$ with the property that each $v_i$ has at most $k$ non-zero entries, such that 
\begin{equation}\label{eq:k_incoh}
X = \sum_{i=1}^m v_i v_i^*. 
\end{equation} 
\end{definition}  

With this defined, we can now state the following, which is a generalization of a result from~\cite{johnston2025tight} from $n-1$ to general $k$ and defer the proof to~\cref{app:equiv}.  

\begin{lemma} \label{lem:obvious}
Let $k$ be a positive integer. 
An indexed list of (possibly sub-normalized) pure quantum states $\{ \ket{\psi_1}, \ldots, \ket{\psi_n} \} \subset \C^d$ is $k$-\emph{learnable} if and only if its Gram matrix is $k$-incoherent.  
\end{lemma}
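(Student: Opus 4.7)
The plan is to chain together three elementary equivalences, with~\cref{lem:SDP_same} doing the main work. First, by definition, the indexed list is $k$-learnable if and only if SDP~\cref{SDP_OG} attains the optimal value $0$. By~\cref{lem:SDP_same}, this is equivalent to SDP~\cref{SDP_red} attaining the optimal value $0$, i.e., to the existence of $W_S \in \Pos(\C^n)$ with $\sum_{|S|=k} W_S = \tfrac{1}{n}G$ and $\bra{i}W_S\ket{i} = 0$ for every pair $(i,S)$ with $i \notin S$.

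The next step is to convert this diagonal-vanishing condition into a support condition. Here I would invoke the standard PSD fact that if $W \succeq 0$ and $\bra{i}W\ket{i} = 0$ then the entire $i$-th row and column of $W$ vanish (since $\bra{i}W\ket{i} = \|W^{1/2}\ket{i}\|^2$). Applying this to each $i \notin S$ shows that each feasible $W_S$ is supported on the $S \times S$ principal block, so $k$-learnability reduces to the existence of a decomposition $G = \sum_{|S|=k} nW_S$ as a sum of PSD matrices each supported on a $k$-element subset of $\{1,\dots,n\}$.

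Finally, I would verify that such a decomposition exists if and only if $G$ is $k$-incoherent in the sense of~\cref{defn:k_incoh}. For the forward direction, I would take the spectral decomposition of each $W_S$, writing $nW_S = \sum_j u_j^{(S)}(u_j^{(S)})^*$ with every $u_j^{(S)}$ supported on $S$, and stack these vectors over all $S$ to obtain an expression of $G$ as a sum of rank-one PSD terms whose generating vectors each have at most $k$ nonzero entries. Conversely, given a $k$-incoherent decomposition $G = \sum_i v_i v_i^*$, I would extend each $\mathrm{supp}(v_i)$ to an arbitrary $k$-element set $S_i$ and define $W_S := \tfrac{1}{n}\sum_{i : S_i = S} v_i v_i^*$, which is a feasible point of SDP~\cref{SDP_red} with objective value $0$. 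The argument is essentially bookkeeping once~\cref{lem:SDP_same} is in hand, so I do not foresee a serious obstacle; the step that most merits care is the diagonal-to-support reduction, and the only small subtlety in the converse is the arbitrary padding of short-support vectors up to full $k$-sets, which is harmless.
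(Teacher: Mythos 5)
Your proposal follows essentially the same route as the paper: both proofs pivot on \cref{lem:SDP_same}, then translate the zero-objective condition of SDP~\cref{SDP_red} into the support condition on the $W_S$ via the standard PSD diagonal-vanishing fact, and finally match that with \cref{defn:k_incoh}. The only point you gloss over is that \cref{lem:SDP_same} equates optimal \emph{values}, so passing from ``SDP~\cref{SDP_red} has optimal value $0$'' to ``SDP~\cref{SDP_red} \emph{attains} value $0$'' needs the (easy) observation that its feasible region is nonempty and compact, which is exactly the small extra step the paper's proof supplies.
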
 

Another closely related definition is that of factor width~\cite{barioli2003maximal, johnston2025factor, boman2005factor}, below. 

\begin{definition}\label{defn:factor_width}
    Let $k$ be a positive integer. The \emph{factor width} of a positive semidefinite
    matrix $X$ is the smallest $k$ such that it is $k$-incoherent. 
\end{definition} 

Equipped with this definition, we have the immediate corollary. 

\begin{corollary}
Let $k$ be a positive integer. 
An indexed list of (possibly subnormalized) pure quantum states $\{ \ket{\psi_1}, \ldots, \ket{\psi_n} \} \subset \C^d$ has learning width equal to the factor width of its Gram matrix.   
\end{corollary}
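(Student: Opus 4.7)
The plan is to derive this statement as an immediate consequence of Lemma~\ref{lem:obvious}. That lemma establishes the equivalence, for each fixed positive integer $k$, between $k$-learnability of the indexed list and $k$-incoherence of its Gram matrix $G$. Since the learning width and factor width are both defined as the smallest $k$ for which their respective properties hold (Definitions~\ref{defn:k_learn} and~\ref{defn:factor_width}), I would take the minimum over $k$ on both sides of the biconditional to conclude equality.

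Concretely, I would let $\ell$ denote the learning width of the list and $f$ the factor width of $G$. I would then consider the two sets
\begin{equation*}
A = \{k \in \mathbb{N} : \text{the list is $k$-learnable}\} \quad \text{and} \quad B = \{k \in \mathbb{N} : G \text{ is $k$-incoherent}\},
\end{equation*}
and observe that Lemma~\ref{lem:obvious} yields $A = B$, since a positive integer $k$ lies in $A$ if and only if it lies in $B$. By Definitions~\ref{defn:k_learn} and~\ref{defn:factor_width}, we have $\ell = \min A$ and $f = \min B$, so the equality of sets gives $\ell = f$.

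The only thing worth verifying before invoking the minimum is that $A$ and $B$ are nonempty. For $A$, the list is trivially $n$-learnable using the POVM that assigns $M_{\{1,\dots,n\}} = I$ and $M_S = 0$ for every other $S$; for $B$, any $n \times n$ positive semidefinite matrix is $n$-incoherent via its spectral decomposition (each eigenvector has at most $n$ nonzero entries). Both sets are therefore nonempty subsets of $\{1, \dots, n\}$, so their minima exist and the argument goes through.

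There is no substantive obstacle here: the entire mathematical content of the corollary is already absorbed into Lemma~\ref{lem:obvious}, whose proof is deferred to Appendix~\ref{app:equiv}. The corollary functions essentially as a restatement of that lemma in the language of the two width quantities, and the proof is a one-line observation that a biconditional at each level $k$ forces the two minima to coincide.
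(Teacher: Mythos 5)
Your proposal is correct and matches the paper, which simply declares the corollary immediate from Lemma~\ref{lem:obvious} and the definitions of learning width and factor width; your set-equality argument $A=B$ followed by taking minima is precisely the intended (one-line) reasoning, and your check that both sets are nonempty is a harmless extra verification.
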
  

With this direct relationship between the learning width of pure states and the factor width of its Gram matrix, we now define the decision and optimization versions of these problems, and show that they are polynomial-time reducible to each other. 

%%%%% 

\section{Weak membership and optimization problems} 
\label{sect:dec}

In order to discuss the computational complexity of $k$-incoherence, we consider it as a weak membership problem. 
We adopt the notation and definitions from~\cite{liu2007complexity} and apply it to the set of trace-bounded $k$-incoherent matrices, denoted by 
\begin{equation} 
\Ik := \{ X \in \Herm(\C^n) \; : \; \tr(X) \leq 1, X \text{ is } k\text{-incoherent} \}. 
\end{equation} 
This is clearly a closed, convex set, and in Lemma~\ref{lem:abs_k_incoh_ball_here} we prove that it has non-empty interior.

We define the following three sets of Hermitian matrices that are relevant to weak membership of sets with respect to the Frobenius norm defined as $\| A \|_F = \sqrt{\tr(A^*A)}$.  
\begin{align} 
B(X,\delta) & := \{ Y \; : \; \| X - Y \|_{F} \leq \delta \} \label{normball} \\ 
S(\Ik, \delta) & := \{ X \; : \; \exists Y \in \Ik \; \text{ such that } \, \| X - Y \|_{F} \leq \delta \} \label{setball} \\ 
S(\Ik, -\delta) & := \{ X \; : \; B(X, \delta) \subseteq \Ik \} \label{strictinterior}
\end{align} 

The set \cref{normball} is the ball centered at $X$ of radius $\delta$, the set \cref{setball} is a $\delta$-outer approximation of the set $\Ik$, and the set \cref{strictinterior} is a $\delta$-inner approximation of the set $\Ik$. 
Decision problems are typically formulated relative to these sets in order to avoid spurious computational difficulty associated with points close to the boundary.  

We can now define the weak membership problem. 

\begin{definition}[$\WMEM(K, \delta)$]  
The weak membership problem over a set $K$ is as follows. 
Given a Hermitian matrix $X$ and a real parameter $\delta > 0$, decide between the following two cases: 
\begin{itemize}[itemsep=2pt, topsep=8pt]
\item \textup{Yes case:} $X \in S(K,-\delta)$;  
\item \textup{No case:} $X \in S(K, \delta)$. 
\end{itemize} 
\end{definition} 

From this, we can define our first decision problem. 

\begin{definition}[\fw] \label{defn:kinc}
Given a positive integer $k$, $X \in \Herm(\C^n)$, and real parameter $\delta > 0$, solve the weak membership problem $\WMEM(\Ik, \delta)$. 
\end{definition} 

Before defining the weak membership problem for $k$-learnability, we take a closer look at the possible Gram matrices arising from a list of pure states. 
Consider the list $\{ \ket{\psi_1}, \ldots, \ket{\psi_n} \}$ and its Gram matrix $G_{i,j} = \braket{\psi_i}{\psi_j}$. 
For a normalized state, the corresponding diagonal entry is $1$, and for subnormalized states it is strictly less than $1$. 
We also note that Gram matrices are always positive semidefinite, and thus $\frac{1}{n} G$ is a trace-bounded positive semidefinite matrix. 
From Lemma~\ref{lem:obvious}, we know thus have $\{ \ket{\psi_1}, \ldots, \ket{\psi_n} \}$ is $k$-learnable if and only if $\frac{1}{n} G \in \Ik$. 

We thus arrive at the following membership problem for learning width.  

\begin{definition}[\lw] 
Given $k$ be a positive integer, an indexed list of (possibly subnormalized) pure quantum states $\{ \ket{\psi_1}, \ldots, \ket{\psi_n} \} \subset \C^n$, and a real parameter $\delta > 0$, solve the $\fw$ problem for its normalized Gram matrix $\frac{1}{n} G$.  
\end{definition} 

Now that we have defined these two decision problems, we can state an immediate corollary. 

\begin{corollary} \label{kinclearn}
$\lw$ and $\fw$ are polynomial-time reducible to each other. 
\end{corollary}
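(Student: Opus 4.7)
The plan is to exploit the bijection between $k$-learnability of an indexed list of states and $k$-incoherence of its Gram matrix established by Lemma~\ref{lem:obvious}. One direction of the reduction amounts to computing the Gram matrix of a given list of states, and the other to extracting states from a positive semidefinite matrix via a Cholesky-type factorization.

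For the direction $\lw \to \fw$, the reduction is essentially tautological. Given an indexed list $\{\ket{\psi_1}, \ldots, \ket{\psi_n}\} \subset \C^n$ and a tolerance $\delta > 0$, I would form the Gram matrix $G$ with entries $G_{ij} = \braket{\psi_i}{\psi_j}$ in time $O(n^3)$ and invoke the $\fw$ oracle on the instance $(\tfrac{1}{n}G, \delta)$. The two yes/no answers agree by the very definition of $\lw$.

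For the direction $\fw \to \lw$, given $(X, \delta)$ with $X \in \Herm(\C^n)$, I would first preprocess: using an eigendecomposition of $X$ (computable in polynomial time), test whether $X$ is approximately positive semidefinite. Since $\Ik \subseteq \Pos(\C^n)$, if the minimum eigenvalue of $X$ is more negative than $-\delta$ then $X$ lies at Frobenius distance greater than $\delta$ from every element of $\Ik$, and I would output the ``No case'' directly. Otherwise, I would compute an (approximate) Cholesky factorization $X = V^*V$ with $V \in \Lin(\C^n)$, absorbing the PSD-projection error into the tolerance, and take the columns of $V$ as a list of $n$ pure states in $\C^n$ whose Gram matrix is exactly $X$. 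Feeding this list to the $\lw$ oracle with a polynomially related tolerance $\delta'$ returns the weak-membership answer for $\tfrac{1}{n}X$, which determines the weak-membership answer for $X$ by the scaling invariance of $k$-incoherence together with Lemma~\ref{lem:obvious}.

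The main obstacle is bookkeeping for how $\delta$ transforms through the intermediate operations, namely the PSD projection, the rescaling by a factor of $1/n$, and the Cholesky step itself. Concretely, one needs to verify that the yes case ($X$ deep inside $\Ik$) and the no case ($X$ far from $\Ik$) are each preserved up to a polynomial change in $\delta$. For the yes case this uses the closure of $\Ik$ under multiplication by constants in $[0,1]$ (which preserves $k$-incoherence and the trace bound); for the no case it uses the homogeneity and sub-additivity of the Frobenius norm, with the trace bound on $\Ik$ needing particular attention since it is not preserved under upward scaling. All the subroutines involved (eigendecomposition, PSD projection, Cholesky factorization, and scalar multiplication) admit standard polynomial bit-complexity guarantees, so the composite reduction runs in polynomial time in both directions.
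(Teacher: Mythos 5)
Your reduction is the same as the paper's: compute the Gram matrix to go from $\lw$ to $\fw$, and Cholesky-factor the (approximately PSD, trace-bounded) matrix to go from $\fw$ to $\lw$. The paper's own justification is a single sentence that is, if anything, less careful than your proposal about rejecting far-from-PSD inputs and about tracking the $1/n$ normalization of the Gram matrix, so you have matched and slightly refined the intended argument.
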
 

The proof follows since if given the states, one can compute the Gram matrix to precision $\delta$ and, conversely, if given a trace-bounded positive semidefinite matrix $X$, one can compute vectors $\{ v_1, \ldots, v_n \}$ such that $X_{i,j} = v_i^*v_j$ via the (efficient) Cholesky decomposition algorithm. 
In other words, there is a polynomial-time algorithm to map trace-bounded positive semidefinite operators to an indexed list if subnormalized pure quantum states, and vice versa. 

We now define the optimization versions of the preceding decision problems. 

\begin{definition}[\mfw] 
Given a positive semidefinite matrix $X\in \Pos(\mathbb{C}^n)$ of trace $\leq 1$, find the smallest $k$ such that $(k, X)$ is a YES instance to $\fw$.   
\end{definition} 

\begin{definition}[\mlw]  
Given an indexed list of (possibly subnormalized) pure quantum states $\{ \ket{\psi_1}, \ldots, \ket{\psi_n} \} \subset \C^n$, find the smallest $k$ such that $(k, \{ \ket{\psi_1}, \ldots, \ket{\psi_n} \})$ is a YES instance to $\lw$.   
\end{definition} 

By standard techniques, e.g., binary search, we have that the decision problem
$\fw$ and its optimization variant $\mfw$ are polynomial-time reducible to each
other. Similarly, $\lw$ and $\mlw$ are polynomial-time reducible to each
other~\cite{kleinberg2006algorithm}. Combined with \cref{kinclearn}, we obtain
the following. 

\begin{theorem} \label{thm:reducible} 
The following problems are all polynomial-time reducible to each other: 
\begin{itemize} 
\item $\lw$; 
\item $\mlw$; 
\item $\fw$; 
\item $\mfw$.  
\end{itemize}
\end{theorem}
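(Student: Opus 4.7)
The plan is to observe that the theorem is essentially an assembly of three pairwise reductions that together connect all four problems as vertices in a mutual-reducibility graph. The first of these reductions, $\lw \leftrightarrow \fw$, has already been established as \cref{kinclearn}: Cholesky decomposition sends a trace-bounded PSD matrix to a list of (sub-normalized) vectors, and conversely, inner products produce a Gram matrix, both in polynomial time and with controllable numerical error (of order $\delta$). So this edge is free.

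Next I would handle $\fw \leftrightarrow \mfw$. The direction $\fw \leq_P \mfw$ is immediate: to decide whether $X$ is (approximately) $k$-incoherent, compute the minimum factor width $k^*$ returned by $\mfw(X)$ and compare to $k$. For the reverse direction $\mfw \leq_P \fw$, I would invoke binary search over $k \in \{1, \ldots, n\}$, a discrete range of size polynomial in the input dimension. Each query is a single call to $\fw$ at a fixed $k$ and $\delta$, and the monotone property ``$k$-incoherent implies $(k+1)$-incoherent'' (which follows directly from \cref{defn:k_incoh} by embedding a $k$-supported vector into a $(k+1)$-support) ensures that binary search identifies the correct threshold in $O(\log n)$ queries. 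The same argument verbatim, swapping Gram matrices for lists of states, yields $\lw \leftrightarrow \mlw$; this is the content cited from~\cite{kleinberg2006algorithm}.

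Finally, I would compose these three bidirectional reductions to obtain every remaining pair: for instance $\mlw \leq_P \lw \leq_P \fw \leq_P \mfw$, and similarly in the reverse direction. Since polynomial-time reductions compose, all $\binom{4}{2}$ pairs are mutually polynomial-time reducible, establishing the theorem.

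The only potential snag I anticipate is the interaction between the continuous tolerance parameter $\delta$ in the weak-membership framework and the discrete optimization over integer $k$. Concretely, one must ensure that the $\delta$ used for each decision call in the binary search is chosen small enough (polynomial in the inverse of the gap between $S(\Ik,-\delta)$ and $S(\mathcal{I}_{k-1}^n,\delta)$) so that the promise regime of $\fw$ reliably distinguishes the threshold level. I would address this by noting that the tolerance $\delta$ specified in the definition of $\mfw$/$\mlw$ can be inherited (up to a polynomial rescaling) by the $\fw$/$\lw$ oracle calls and by ensuring all intermediate computations (Cholesky, Gram matrix formation, inner products) preserve this precision. This is a routine bit-complexity bookkeeping exercise rather than a genuine obstacle, so the overall proof remains short.
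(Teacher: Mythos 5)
Your proposal is correct and follows essentially the same route as the paper: the $\lw \leftrightarrow \fw$ edge comes from \cref{kinclearn} (Cholesky/Gram matrix in polynomial time), the decision/optimization edges come from binary (or even linear) search over $k \in \{1,\dots,n\}$ justified by the monotonicity of $k$-incoherence, and the remaining pairs follow by composing polynomial-time reductions. Your added remarks on monotonicity and on propagating the tolerance $\delta$ through the oracle calls are more explicit than the paper's one-line appeal to ``standard techniques,'' but they do not change the argument.
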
 

To show that $\lw$ is NP-hard, and thus they all are, we examine the (weak) optimization over matrices with factor width at most $k$, next. 

%%%%%%%%%%%%%%%%%%%%%%% 

\subsection{Weak optimization problems} 

In order to prove NP-hardness of testing weak membership in $\Ik$, we will use a reduction from \textit{weak optimization} over $\Ik$, which we now define, following~\cite{liu2007complexity}.
 
\begin{definition}[$\WOPT(K, \delta)$]  
The weak optimization problem over a set $K$ is as follows. 
Given a Hermitian matrix $C$ with $\| C \|_{F} = 1$, and real parameters $\gamma$ and $\delta > 0$, decide between the following two cases: 
\begin{itemize}[itemsep=2pt, topsep=8pt]
\item \textup{Yes case:} There exists $X \in S(K,-\delta)$ with $\ip{C}{X} \geq \gamma + \delta$;  
\item \textup{No case:} For all $X \in S(K, \delta)$, we have $\ip{C}{X} \leq \gamma - \delta$. 
\end{itemize} 
\end{definition} 

The reduction from weak optimization to weak membership is given as follows.

\begin{fact}[Theorem 2.3 in \cite{liu2007complexity}]\label{fact}
Given a closed convex set $K \subseteq \Herm(\C^n)$ satisfying 
\begin{equation} 
B(X,r) \subseteq K \subseteq B(0,R), 
\end{equation} 
for some $X \in \Herm(\C^n)$, and real parameters $r, R$ satisfying $R/r \leq {\operatorname{poly}}(n)$. 
For any $\delta$ which is inverse-polynomial in $n$, there exists $\delta'$ which is inverse-polynomial in $n$ for which there is a polynomial-time oracle reduction from $\WOPT(k,\delta)$ to $\WMEM(k,\delta')$.
\end{fact}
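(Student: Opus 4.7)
The plan is to follow the Gr\"otschel--Lov\'asz--Schrijver (GLS) reduction from weak optimization to weak membership, in the Hermitian-matrix form worked out by Liu. The core technique is the shallow-cut ellipsoid method run over the intersection of $K$ with an affine half-space determined by the objective $C$. First, I would recast $\WOPT(K,\delta)$ as an approximate feasibility question: for a target value $\gamma$, decide whether the convex body
\[
K_{\gamma}^{+} \;=\; \{X \in K : \ip{C}{X} \geq \gamma\}
\]
is non-empty up to the relevant approximation margin. Since $\|C\|_F = 1$ and $K \subseteq B(0,R)$, a binary search on $\gamma$ over $[-R,R]$ reduces $\WOPT$ to polynomially many such feasibility tests, each at an inverse-polynomial target precision.

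Second, for each feasibility test I would run the shallow-cut ellipsoid method on $K_{\gamma}^{+}$, initialized with the outer ball $B(0,R)$ intersected with the half-space $\{\ip{C}{X} \geq \gamma\}$. At each iteration, given the center $Y$ of the current ellipsoid, the algorithm needs either (i) a certificate that $Y \in S(K,\delta')$, in which case $Y$ is returned as a candidate optimizer, or (ii) an approximate separating hyperplane between $Y$ and $K$. The weak membership oracle $\WMEM(K,\delta')$ directly decides (i). When (i) fails, one builds a separating direction by querying $\WMEM$ on a short sequence of perturbations of $Y$ along the segment from $Y$ to the inner-ball center $X_0$; because $B(X_0,r) \subseteq K$, this inner clearance guarantees that such a sweep identifies a hyperplane that separates $Y$ from $K$ up to additive error controlled by $\delta'$.

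Third, I would bound the iteration count and aggregate error. The shallow-cut ellipsoid method in the ambient space $\Herm(\C^n) \cong \R^{n^{2}}$ (equipped with the Frobenius inner product) shrinks ellipsoid volume by a constant factor per step, so after $\operatorname{poly}(n,\log(R/r),\log(1/\delta'))$ iterations either a feasible $Y$ witnesses the \textsc{Yes} case of $\WOPT(K,\delta)$, or the ellipsoid volume drops below that of any ball of radius comparable to $\delta'$, ruling out feasibility and certifying the \textsc{No} case. Choosing $\delta'$ to be a sufficiently small inverse polynomial in $n$, $R/r$, and $1/\delta$ ensures that the cumulative approximation slack from all three nested layers (binary search, ellipsoid iterations, and separation construction) stays below $\delta$.

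The main obstacle is precisely this bookkeeping of precision: one must verify that the perturbation procedure for turning a failed membership test into a separating hyperplane uses only polynomially many $\WMEM$ queries; that the inexact ellipsoid arithmetic does not destroy the shallow-cut volume-reduction guarantee; and that the hypothesis $R/r \leq \operatorname{poly}(n)$ is genuinely what keeps the initial volume-to-target ratio, and thus the iteration count, polynomial. Translating the standard real-vector GLS argument into the Hermitian-matrix setting is otherwise routine via the isometric identification $\Herm(\C^n) \cong \R^{n^{2}}$, and this careful accounting is exactly the content of~\cite[Theorem~2.3]{liu2007complexity}, which we simply cite rather than reprove.
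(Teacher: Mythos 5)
The paper states this result as a Fact with no proof, deferring entirely to Theorem~2.3 of the cited reference, and your proposal does the same while additionally giving a correct sketch of the standard GLS shallow-cut-ellipsoid argument (weak membership $\to$ weak separation via the inner ball $B(X,r)$, then weak separation $\to$ weak optimization) that underlies it. One minor simplification: since $\gamma$ is part of the input to $\WOPT(K,\delta)$ as defined here, the binary-search layer you describe is unnecessary, though harmless.
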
 

For shorthand, we will use $\operatorname{poly}$ to denote an arbitrary polynomial function, and write this statement simply as $\WOPT(K,1/\operatorname{poly})$ reduces to $\WMEM(K,1/\operatorname{poly})$. We prove that this reduction holds for $\Ik$ in the next section.
  
%%%%%

\section{NP-hardness of {\fw}}  
\label{sec:hardness-proofs}

In this section we prove that deciding weak membership in $\fw$ is NP-hard.

\begin{theorem}\label{thm:nphard}
$\WMEM(\Ik,1/\operatorname{poly})$ is NP-hard. 
\end{theorem}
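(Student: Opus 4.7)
The plan is to combine two polynomial-time reductions: first reduce $\WOPT(\Ik,1/\poly)$ to $\WMEM(\Ik,1/\poly)$ via~\cref{fact}, and then reduce the classical $k$-\problem{Clique} problem to $\WOPT(\Ik,1/\poly)$. For the first reduction, I would verify the hypotheses of~\cref{fact}: the set $\Ik$ is contained in the Frobenius unit ball $B(0,1)$, because any PSD matrix of trace at most $1$ satisfies $\|X\|_F^2 = \sum_i \lambda_i^2 \leq (\sum_i \lambda_i)^2 = \tr(X)^2 \leq 1$; and by~\cref{lem:abs_k_incoh_ball_here}, $\Ik$ contains a Frobenius ball of inverse-polynomial radius. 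Hence $R/r \leq \poly(n)$ and~\cref{fact} applies, so it suffices to prove NP-hardness of $\WOPT(\Ik,1/\poly)$.

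For the reduction from $k$-\problem{Clique}, the key structural fact is that every $X \in \Ik$ with $\tr(X) = 1$ is a convex combination of rank-one matrices $uu^*$ with $\|u\| = 1$ and $|\operatorname{supp}(u)| \leq k$ (this follows directly from~\cref{defn:k_incoh} by writing $v_i = \sqrt{p_i}\, u_i$). Consequently, for any $M \in \Herm(\C^n)$,
\begin{equation}
\max_{X \in \Ik,\; \tr(X) = 1} \ip{M}{X} \;=\; \max_{|S|\leq k} \lambda_{\max}(M_S),
\end{equation}
where $M_S$ is the principal submatrix of $M$ indexed by $S$. Given a graph $G = (V,E)$ on $n$ vertices with adjacency matrix $A$, set $M := A + I$ and $C := M/\|M\|_F$; since $M$ has non-negative off-diagonal entries, Cauchy interlacing forces the maximum above to be attained at $|S| = k$. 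The heart of the reduction is the eigenvalue gap: if $G$ contains a $k$-clique $S$, then $M_S = J_k$ (the all-ones matrix), so $\lambda_{\max}(M_S) = k$. Otherwise, every $k$-subset induces a subgraph missing at least one edge, so $\tr(A_{G[S]}^2) = 2|E(G[S])| \leq k(k-1) - 2$; combining this with $\tr(A_{G[S]}) = 0$ via Cauchy--Schwarz on the non-leading eigenvalues gives $\lambda_{\max}(A_{G[S]})^2 \leq 2(k-1)|E(G[S])|/k \leq (k-1)^2 - 2(k-1)/k$, and a short calculation yields $\lambda_{\max}(A_{G[S]}) \leq k - 1 - 1/k$, hence $\lambda_{\max}(M_S) \leq k - 1/k$. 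The yes/no gap is therefore at least $1/k$, which is $1/\poly(n)$.

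Finally, I would set $\gamma$ to be the midpoint of $[k - 1/k,\, k]/\|M\|_F$ and $\delta$ to a small fraction of the rescaled gap; since $\|M\|_F = \sqrt{2|E(G)| + n} \leq n$, both quantities are inverse-polynomial. To handle the $\pm\delta$ approximations in $\WOPT$: in the yes case, take $X = (1 - \epsilon)\, vv^* + \epsilon\, Y$ where $v = \frac{1}{\sqrt{k}} \sum_{i \in S}\ket{i}$ and $Y$ is the interior point supplied by~\cref{lem:abs_k_incoh_ball_here}, choosing $\epsilon$ small enough that $X \in S(\Ik,-\delta)$ while $\ip{C}{X} \geq \gamma + \delta$; in the no case, Lipschitz continuity of $\ip{C}{\cdot}$ in the Frobenius norm extends the upper bound from $\Ik$ to $S(\Ik,\delta)$ with only an additive $O(\delta)$ loss. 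The main obstacle is quantitative: the Cauchy--Schwarz eigenvalue bound gives a clean $1/k$ separation, but this separation must be shown to survive normalization by $\|M\|_F$ and the $\delta$-fuzz simultaneously, all within a single inverse-polynomial budget; once this budget is fixed, the reduction is completed by setting $\delta$ small enough that both the perturbation in the yes case and the continuity slack in the no case leave the gap intact.
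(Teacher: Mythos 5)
Your proposal follows essentially the same route as the paper: reduce \problem{Clique} to $\WOPT(\Ik,1/\operatorname{poly})$ and then to $\WMEM(\Ik,1/\operatorname{poly})$ via \cref{fact} and \cref{lem:abs_k_incoh_ball_here}, using the extreme-point/principal-submatrix characterization of the optimum and the spectral bound $\lambda_{\max}\leq\sqrt{2f(k-1)/k}$ to separate the clique and no-clique cases by an inverse-polynomial gap. The only differences are cosmetic (shifting the adjacency matrix by $I$, and deriving the eigenvalue bound from Cauchy--Schwarz rather than citing it), and your handling of the $\delta$-fuzz matches the paper's.
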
 

Our starting point is a reduction from $\WOPT(\Ik,1/\operatorname{poly})$.

\begin{lemma}\label{lem:reduction}
There is a polynomial-time oracle reduction from the problem $\WOPT(\Ik,1/\operatorname{poly})$ to the problem $\WMEM(\Ik,1/\operatorname{poly})$.
\end{lemma}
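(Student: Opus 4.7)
The plan is to invoke Fact~\ref{fact} directly with $K = \Ik$. The only substantive task is to verify the geometric ``sandwich'' hypothesis: exhibit $X_0 \in \Herm(\C^n)$ and radii $r, R$ with $B(X_0, r) \subseteq \Ik \subseteq B(0, R)$ and $R/r \leq \operatorname{poly}(n)$. The outer bound is immediate: every $X \in \Ik$ is positive semidefinite with $\tr(X) \leq 1$, so $\|X\|_F \leq \tr(X) \leq 1$, giving $R = 1$.

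For the inner bound, I would take the natural candidate $X_0 = \tfrac{1}{2n} I$. It is manifestly $1$-incoherent (hence $k$-incoherent for every $k \geq 1$) via the decomposition $X_0 = \sum_{i=1}^{n} \tfrac{1}{2n} e_i e_i^*$, and it has trace $1/2$, sitting well inside the trace constraint. I would then show $B(X_0, r) \subseteq \Ik$ for some $r = \Omega(1/n^2)$. The trace condition $\tr(Y) \leq 1$ for $Y \in B(X_0, r)$ follows from Cauchy--Schwarz: $|\tr(Y) - 1/2| \leq \sqrt{n}\,\|Y - X_0\|_F \leq \sqrt{n}\,r$. For the $k$-incoherence condition, I would observe that every such $Y$ is Hermitian and strictly diagonally dominant as soon as $r < 1/(2n^2)$, since $Y_{ii} \geq \tfrac{1}{2n} - r$ while $\sum_{j \neq i} |Y_{ij}| \leq (n-1) r$ by the trivial entrywise bound $|Y_{ij} - (X_0)_{ij}| \leq \|Y - X_0\|_F$. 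The classical result in the factor-width literature~\cite{barioli2003maximal, boman2005factor} then guarantees that every Hermitian diagonally dominant matrix decomposes as a sum of positive semidefinite matrices each supported on a single $2 \times 2$ principal submatrix, and is therefore $2$-incoherent. Consequently $Y \in \Ik$ for every $k \geq 2$.

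Taking $r = 1/(3n^2)$ and $R = 1$, the ratio $R/r \leq 3n^2$ is polynomial in $n$, so Fact~\ref{fact} delivers the claimed polynomial-time oracle reduction from $\WOPT(\Ik, 1/\operatorname{poly})$ to $\WMEM(\Ik, 1/\operatorname{poly})$. The case $k = 1$ must be excluded from this argument, since $\mathcal{I}_1^n$ is the set of diagonal PSD matrices and has empty interior in $\Herm(\C^n)$; however, $\WMEM(\mathcal{I}_1^n, \delta)$ is trivially decidable (check whether every off-diagonal entry has magnitude at most $\delta$), so this edge case does not affect the downstream NP-hardness result. The one step requiring genuine justification is the diagonal-dominance-implies-$2$-incoherence decomposition, but the explicit splitting into $2\times 2$ blocks is elementary and well documented, so I do not expect it to be a serious obstacle; the remainder of the argument is entirely routine.
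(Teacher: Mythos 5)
Your proof is correct, and it reaches the sandwich condition of Fact~\ref{fact} by a genuinely different route than the paper. Both arguments agree on the outer ball ($\|X\|_F \leq \tr(X) \leq 1$ for $X \in \Ik$, so $R = 1$) and both ultimately rest on exhibiting a $2$-incoherent neighborhood of a scaled identity, but the paper centers its ball at $I/n$ and proves $B(I/n, 1/n) \subseteq \Ik$ via a spectral criterion imported from prior work (\cite[Theorem~7]{johnston2022absolutely}: a trace-one PSD matrix with $\sum_j \lambda_j^2 \leq 1/(n-1)$ is $2$-incoherent), whereas you center at $I/(2n)$ and use entrywise bounds plus strict diagonal dominance, invoking the classical fact that a Hermitian diagonally dominant matrix with nonnegative diagonal splits into PSD pieces supported on $2\times 2$ principal submatrices. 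Your radius $r = 1/(3n^2)$ is smaller than the paper's $1/n$, but since only the polynomiality of $R/r$ matters, this is immaterial. Your approach is more elementary and self-contained (the diagonal-dominance decomposition is a one-line computation with the blocks $\bigl(\begin{smallmatrix}|Y_{ij}| & Y_{ij}\\ \overline{Y_{ij}} & |Y_{ij}|\end{smallmatrix}\bigr)$), while the paper's larger ball and its associated corollary about the matrices $D_x$ are reused later in the proof of Proposition~\ref{prop:graph_clique}, so the paper gets additional mileage from its lemma that your argument would not provide. You are also right, and more careful than the paper's write-up, to flag that $k=1$ must be excluded since $\mathcal{I}_1^n$ has empty interior; the exclusion is harmless because the hardness reduction only uses $k \geq 2$ and the Yes case of $\WMEM(\mathcal{I}_1^n,\delta)$ is vacuous.
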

This is a straightforward but slightly tedious application of Fact~\ref{fact}.  See~\cref{app:reduction}. Hence, it suffices to prove NP hardness of $\WOPT(\Ik,1/\operatorname{poly})$.

For a Hermitian matrix $C\in\Herm(\mathbb{C}^n)$, consider the linear optimization problem 
\begin{equation} \label{Opt}
\mu(k,C):=\max \{ \ip{C}{X} : X\in \Ik \}. 
\end{equation} 
(Note that the maximum is attained since we are optimizing a continuous function over a compact set.) If a linear optimization problem over a compact set has an optimal solution, then it has an optimal solution at an extreme point.
We now characterize the extreme points of $\Ik$.

\begin{lemma} \label{lem:ex}
The set of extreme points of $\Ik$ is 
\begin{equation} 
P=\{ vv^* : v \text{ has at most } k \text{ non-zero entries and } \norm{v}_2 = 1 \}\cup \{0\}. 
\end{equation} 
\end{lemma}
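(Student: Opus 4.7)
The plan is to prove both inclusions: first that every element of $P$ is extreme in $\Ik$, and second that every extreme point of $\Ik$ lies in $P$. The key observations are that (a) rank-one PSD matrices are extremal in the PSD cone, and (b) any $k$-incoherent matrix comes with a decomposition into rank-one $k$-incoherent pieces that, after normalization, are elements of $P$.

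For the easy direction, I would first note that $0$ is trivially extreme. For $vv^* \in P$ with $\|v\|_2 = 1$ and $v$ supported on at most $k$ coordinates, suppose $vv^* = \lambda Y + (1-\lambda) Z$ with $Y, Z \in \Ik$ and $\lambda \in (0,1)$. Since $Y, Z \succeq 0$ and their convex combination is the rank-one matrix $vv^*$, both $Y$ and $Z$ must have range inside $\mathrm{span}(v)$, so $Y = \alpha vv^*$ and $Z = \beta vv^*$ with $\alpha, \beta \geq 0$. The trace constraint $\tr(Y), \tr(Z) \leq 1$ forces $\alpha, \beta \leq 1$, while $\lambda \alpha + (1-\lambda)\beta = 1$ then forces $\alpha = \beta = 1$, so $Y = Z = vv^*$.

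For the harder direction, let $X$ be an extreme point of $\Ik$ with $X \neq 0$. I would split on the value of $\tr(X)$. If $\tr(X) < 1$, then for sufficiently small $\varepsilon > 0$ both $(1 \pm \varepsilon) X$ belong to $\Ik$ (scaling preserves $k$-incoherence and keeps the trace at most $1$), and $X = \tfrac{1}{2}(1+\varepsilon)X + \tfrac{1}{2}(1-\varepsilon)X$ contradicts extremality. Hence $\tr(X) = 1$. Using Definition~\ref{defn:k_incoh}, fix a decomposition $X = \sum_{i=1}^m v_i v_i^*$ where each $v_i$ has at most $k$ nonzero entries. Discarding zero terms and setting $p_i = \|v_i\|_2^2$ and $u_i = v_i / \|v_i\|_2$, we obtain
\begin{equation}
X = \sum_{i} p_i\, u_i u_i^*, \qquad \sum_i p_i = \tr(X) = 1, \qquad p_i > 0,
\end{equation}
which expresses $X$ as a convex combination of elements of $P$ (each $u_i u_i^* \in P \subseteq \Ik$). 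Extremality of $X$ then forces $X = u_i u_i^*$ for every index with $p_i > 0$, so $X \in P$.

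The main technical point to watch is the rank-one reduction in the first direction: one must invoke that for $Y, Z \succeq 0$ with $\lambda Y + (1-\lambda) Z = vv^*$ one has $\im(Y), \im(Z) \subseteq \mathrm{span}(v)$, which follows from comparing $\bra{w^\perp}(\lambda Y + (1-\lambda)Z)\ket{w^\perp} = 0$ for any $\ket{w^\perp}$ orthogonal to $v$ and using positivity. Everything else is either convexity bookkeeping or a direct use of the definitions; I do not anticipate a serious obstacle.
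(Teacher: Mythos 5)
Your proof is correct and follows essentially the same route as the paper's: rank-one extremality of $vv^*$ via the range argument for the forward inclusion, and viewing the $k$-incoherent decomposition as a convex combination of normalized rank-one elements for the reverse. The only cosmetic difference is that you dispose of the case $0 < \tr(X) < 1$ up front via the symmetric perturbation $(1\pm\varepsilon)X$, whereas the paper handles the trace after reducing to a single rank-one term (with a slightly garbled convex combination there); your version is, if anything, cleaner.
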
 
This is proven in Appendix~\ref{app:ex}. We can now succinctly write down the optimal value of the optimization problem $\mu(k, C)$. The following corollary is immediate.

\begin{corollary} \label{Batman}
For a Hermitian matrix $C\in \Herm(\mathbb{C}^n)$, we have 
\begin{equation} 
\mu(k,C) = \max_{S \subseteq [n], |S|\leq k} \lambda_{\max}(C_S),
\end{equation} 
where $C_S$ the principal submatrix of $C$ with rows and columns indexed by $S$, and $\lambda_{\max}$ denotes the maximum eigenvalue.
\end{corollary}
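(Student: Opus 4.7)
The plan is to deduce this immediately from Lemma~\ref{lem:ex}. Since $\Ik$ is closed, bounded, and convex, and $X \mapsto \ip{C}{X}$ is linear, the maximum defining $\mu(k,C)$ is attained at an extreme point of $\Ik$. By Lemma~\ref{lem:ex}, every such extreme point is either the zero matrix or of the form $vv^*$, where $v \in \C^n$ is a unit vector supported on some set $S \subseteq [n]$ with $|S| \leq k$. Hence it suffices to evaluate $\ip{C}{X}$ on these two types of extreme points and compare.

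For an extreme point $vv^*$, write $v_S \in \C^{|S|}$ for the restriction of $v$ to its support (still a unit vector). A direct computation yields
$$\ip{C}{vv^*} = \tr(C vv^*) = v^* C v = v_S^* \, C_S \, v_S,$$
which is a Rayleigh quotient for the Hermitian principal submatrix $C_S$. Maximizing over unit vectors $v_S \in \C^{|S|}$ therefore gives $\lambda_{\max}(C_S)$, attained by a top eigenvector of $C_S$. Since $|S|$ may range over any nonempty subset of $[n]$ of size at most $k$, taking the outer maximum over $S$ yields the right-hand side of the claimed formula.

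The only bookkeeping point is the extreme point $X = 0$, which contributes objective value $0$; this is absorbed by allowing $S = \emptyset$ under the convention $\lambda_{\max}(C_\emptyset) = 0$, or equivalently handled by observing that $\mu(k,C) \geq 0$ automatically since $0 \in \Ik$ (and trivially $\lambda_{\max}(C_{\{i\}}) = C_{ii}$ is dominated by $0$ if $C$ has all diagonal entries negative). I anticipate no genuine obstacle in this argument: given Lemma~\ref{lem:ex} and the Rayleigh--Ritz characterization of $\lambda_{\max}$, the corollary follows by direct verification, and all steps are essentially one-liners.
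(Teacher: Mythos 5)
Your proposal is correct and follows exactly the route the paper intends: the paper declares the corollary ``immediate'' from Lemma~\ref{lem:ex} together with the fact that a linear functional on a compact convex set is maximized at an extreme point, and your Rayleigh-quotient evaluation of $\ip{C}{vv^*} = v_S^* C_S v_S$ is precisely the omitted computation. Your bookkeeping remark about the extreme point $0$ (equivalently, the $S=\emptyset$ convention) is a reasonable and correct clarification of an edge case the paper passes over silently.
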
 

In particular, if $C$ is the adjacency matrix of a simple undirected graph $G$, then Corollary~\ref{Batman} shows that $\mu(k, C)$ is equal to the maximum eigenvalue of the adjacency matrix of an at-most-$k$-node subgraph.
It can be shown that the maximum eigenvalue of a $k \times k$ adjacency matrix is maximized when it is a clique. 
This fact plays a critical component in our  NP-hardness proof of $\WOPT(\Ik,1/\operatorname{poly})$.

\begin{definition}[\problem{Clique}~\cite{karp2009reducibility}]
Given a simple undirected nonempty graph $G$ and a positive integer $k$, decide whether $G$ contains a clique of size $k$.  
\end{definition} 

It is known that \problem{Clique} is NP-hard~\cite{karp2009reducibility}. 
The following proposition shows that \problem{Clique} reduces to $\WOPT(\Ik,1/\operatorname{poly})$ for $\Ik$, completing the proof of Theorem~\ref{thm:nphard}. 

\begin{proposition}\label{prop:graph_clique}
    Let $n \geq k \geq 2$ be integers, let $C$ be the adjacency matrix of a simple nonempty undirected graph $G$ on $n$ vertices with $e$ edges, and let $C'=\frac{1}{\sqrt{2e}}C$ be the normalization in Frobenius norm. Then there exist functions $\gamma=\gamma(e,k,n)\in \mathbb{R}$ and $\delta=1/{\operatorname{poly}}(n)$ for which
    \begin{itemize}
        \item $\mu(k,C')\geq \gamma + \delta$ if $G$ contains a $k$-clique.
        \item $\mu(k,C') \leq \gamma - \delta$ if $G$ does not contain a $k$-clique.
    \end{itemize}
\end{proposition}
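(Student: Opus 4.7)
The plan is to apply Corollary~\ref{Batman} to rewrite $\mu(k,C')$ as the maximum Perron root of the adjacency matrix of an induced subgraph of $G$ on at most $k$ vertices, and then exhibit an inverse-polynomial jump in this quantity between the ``$G$ contains a $k$-clique'' and ``$G$ does not contain a $k$-clique'' regimes.

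I would first record that $\|C\|_F=\sqrt{2e}$ (the $2e$ off-diagonal $1$'s of the adjacency matrix), so $C'=C/\sqrt{2e}$. Corollary~\ref{Batman} then yields
\begin{equation}
\mu(k,C') \;=\; \frac{1}{\sqrt{2e}}\,\max_{S\subseteq [n],\,|S|\leq k}\lambda_{\max}\!\bigl(A_{G[S]}\bigr),
\end{equation}
where $A_{G[S]}$ is the adjacency matrix of the induced subgraph on $S$. For the YES side, if $G$ contains a $k$-clique $S$ then $A_{G[S]}=J_k-I_k$, which has Perron root $k-1$ with the all-ones vector as Perron eigenvector; hence $\mu(k,C')\geq (k-1)/\sqrt{2e}$.

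For the NO side, the key quantitative claim I would establish is that whenever $G$ contains no $k$-clique, every $S$ with $|S|\leq k$ satisfies $\lambda_{\max}(A_{G[S]})\leq k-1-\tfrac{2}{k+1}$. When $|S|<k$ this is immediate from $\lambda_{\max}(A_{G[S]})\leq |S|-1\leq k-2$. When $|S|=k$, the induced subgraph $G[S]\neq K_k$ is missing some edge $\{i,j\}$, so $A_{G[S]}$ is dominated entrywise by the adjacency matrix of $K_k-\{i,j\}$; Perron--Frobenius monotonicity for nonnegative matrices then reduces the task to bounding $\lambda_{\max}(A_{K_k-e})$. Exploiting the $S_2\times S_{k-2}$ automorphism of $K_k-e$ -- a Perron eigenvector is constant on the two endpoints of the missing edge and constant on the other $k-2$ vertices -- this collapses to a $2\times 2$ eigenproblem with largest root $\tfrac{1}{2}\bigl((k-3)+\sqrt{(k+1)^2-8}\bigr)$, which after rationalizing $k-1$ minus this value gives the clean bound $\lambda_{\max}(A_{K_k-e})\leq k-1-\tfrac{2}{k+1}$.

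Finally, I would take $\gamma=\tfrac{1}{\sqrt{2e}}\bigl(k-1-\tfrac{1}{k+1}\bigr)$ and $\delta=\tfrac{1}{(k+1)\sqrt{2e}}$; then the YES bound gives $\mu(k,C')\geq(k-1)/\sqrt{2e}=\gamma+\delta$ and the NO bound gives $\mu(k,C')\leq\bigl(k-1-\tfrac{2}{k+1}\bigr)/\sqrt{2e}=\gamma-\delta$, and $\delta\geq 1/[(n+1)\sqrt{n(n-1)}]=1/\operatorname{poly}(n)$ since $e\leq\binom{n}{2}$ and $k\leq n$. The main obstacle is this quantitative $\Omega(1/k)$ separation between $\lambda_{\max}(K_k)$ and $\lambda_{\max}(K_k-e)$: coarse bounds such as $\lambda_{\max}(A_H)^2\leq 2|E(H)|$ become tight at precisely $|E(H)|=\binom{k}{2}$ and so cannot distinguish $K_k$ from $K_k-e$, which is why the explicit symmetry-reduced eigenvector computation on $K_k-e$ is the heart of the argument. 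The edge case $k=2$ is handled automatically: the NO branch is vacuous since any nonempty $G$ already contains a $2$-clique, while the YES bound $\mu(2,C')\geq 1/\sqrt{2e}$ still matches $\gamma+\delta$ under the formulas above.
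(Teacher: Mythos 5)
Your proposal is correct, and for the statement as literally written it is complete; the interesting difference from the paper lies in how the two arguments obtain the quantitative eigenvalue gap in the NO case. The paper invokes the edge-count bound $\lambda_{\max}(A_H)\leq\sqrt{2f(k-1)/k}$ for a $k$-vertex graph $H$ with $f$ edges (tight exactly for $K_k$), so that a missing edge forces $\lambda_{\max}\leq\sqrt{(k-1)^2-2+2/k}$, which is $k-1-\Omega(1/k)$. You instead use Perron--Frobenius monotonicity to reduce to the single worst case $K_k-e$ and compute its Perron root exactly via the equitable partition, getting the clean explicit bound $\lambda_{\max}(A_{K_k-e})\leq k-1-\tfrac{2}{k+1}$; your algebra (the quotient matrix $\bigl(\begin{smallmatrix}0 & k-2\\ 2 & k-3\end{smallmatrix}\bigr)$, the root $\tfrac{1}{2}\bigl((k-3)+\sqrt{(k+1)^2-8}\bigr)$, and the rationalization giving $\geq 2/(k+1)$) all checks out, as do your choices of $\gamma$ and $\delta$ and the $k=2$ edge case. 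Your route is self-contained and arguably sharper, at the cost of the Perron-eigenvector symmetry argument; the paper's route is a one-line citation. One remark: the paper's proof of this proposition actually establishes more than the statement, namely the perturbation-robust version (a point $D_x\in S(\Ik,-\delta)$ achieving the YES bound, and the NO bound holding over all of $S(\Ik,\delta)$), which is what the downstream reduction to $\WOPT(\Ik,1/\operatorname{poly})$ genuinely needs --- the extremal $uu^*$ you use in the YES case lies on the boundary of $\Ik$ and is not itself in the shrunken set. This is not a gap in your proof of the stated proposition, but if you intend it to feed into Theorem~\ref{thm:nphard} you would need to append the smoothing step (mixing $uu^*$ with $I/n$ as in Equation~\eqref{eq:dx}) and absorb the resulting $O(n\delta)$ loss into your $2/(k+1)$ margin, which it comfortably accommodates.
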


See Appendix~\ref{app:hardness} for the proof. The key idea is to show that $\mu(k,C)=k-1$ if $G$ contains a $k$-clique, and $\mu(k,C)\leq k-1-\frac{1}{{\operatorname{poly}}(k)}$ if $G$ does not contain a $k$-clique.

%%%%%
 
\section{Succinct proofs of \fw} 
\label{sect:inNP}

In this section, we show that weak membership is in NP, i.e., $\WMEM(\Ik,1/\operatorname{poly}) \in \operatorname{NP}$, by showing how to find \emph{succinct} proofs. 
For this discussion, it helps to recall the definition of $\Ik$:   

\begin{equation} 
\Ik = \{ X \; : \; 
X = \sum_{i} v_i v_i^*, \text{ where each } 
v_i \text{ has at most $k$ non-zero entries}, \;  
\tr(X) \leq 1 \}. 
\end{equation} 

A first thought to prove that $X \in \Ik$ is to simply provide all of the $v_i$ vectors and check the conditions above. 
However, there are an exponential number of these vectors (at least $\binom{n}{k}$), making this proof too long (it must be polynomial in size). 

The trick to reducing the dimension is notice that $\Herm(\C^n)$ is a vector space of dimension $n^2$, so Carath\'{e}odory's theorem (see, e.g., \cite[Theorem~1.9]{watrous2018theory}) tells us that any member of $\Ik$ can be written as a convex combination of $n^2 + 1$, or fewer, extreme points of that convex set. 
Since we have already characterized the extreme points in Lemma~\ref{lem:ex}, we know that can write any $X \in \Ik$ as 
\begin{equation} \label{check}
X = \sum_{i=1}^{n^2+1} v_i v_i^*
\end{equation}  
where $v_i$ has at most $k$ non-zero entries. 
(We absorbed the convex combination scalars into the definitions of $v_i$.) 
Thus, a short proof of membership in $\Ik$ simply provides the $v_i$ matrices above, checking that they have the proper support, Equation \cref{check}, and its trace.

We remark that even though this proof is succinct, this does not imply it is easy to find. 
Indeed, when optimizing or testing membership of this set, we do not know which polynomially many of the exponentially many $v_i$s are the ones we can use. 

%%%%%

\section{Polynomial-time algorithm for {\fw} for constant k} \label{sect:ellipsoid} 
 
Our approach to efficiently test whether an indexed list of (possibly subnormalized) pure quantum states $\{ \ket{\psi_1}, \ldots, \ket{\psi_n} \}$ is $k$-learnable is to solve the weak membership problem over $\Ik$ of its (scaled) Gram matrix $\frac{1}{n} G$. 

Before continuing, we prove a lemma to bound the Frobenius norm of any matrix we are considering to be in $S(\Ik, \delta)$. 

\begin{lemma} 
Given $X$ and $\delta$, if $\| X \|_{F} > 1 + \delta$, 
then $X \not\in S(\Ik, \delta)$. 
\end{lemma}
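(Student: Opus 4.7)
The plan is to prove the contrapositive: if $X \in S(\Ik, \delta)$, then $\|X\|_F \leq 1 + \delta$. By the definition of $S(\Ik, \delta)$ from Equation~\cref{setball}, membership of $X$ in this set guarantees the existence of some $Y \in \Ik$ with $\|X - Y\|_F \leq \delta$. By the triangle inequality applied to the Frobenius norm, $\|X\|_F \leq \|Y\|_F + \|X - Y\|_F \leq \|Y\|_F + \delta$, so it suffices to show $\|Y\|_F \leq 1$ for every $Y \in \Ik$.

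To establish $\|Y\|_F \leq 1$, I would use two facts about elements of $\Ik$. First, every $Y \in \Ik$ is positive semidefinite, since it is a sum of the positive semidefinite outer products $v_iv_i^*$ appearing in Definition~\ref{defn:k_incoh}. Second, by definition of $\Ik$ we have $\tr(Y) \leq 1$. For a positive semidefinite matrix $Y$ with eigenvalues $\lambda_1, \dots, \lambda_n \geq 0$, one has
\begin{equation}
\|Y\|_F^2 = \sum_{i=1}^n \lambda_i^2 \;\leq\; \Bigl(\sum_{i=1}^n \lambda_i\Bigr)^{2} = \tr(Y)^2 \leq 1,
\end{equation}
where the first inequality uses $\lambda_i \geq 0$ (so all cross terms in the expansion of the squared sum are non-negative). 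Taking square roots yields $\|Y\|_F \leq 1$.

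Chaining these two estimates gives $\|X\|_F \leq 1 + \delta$, which is the contrapositive of the claim. There is no real obstacle here — the argument is essentially just the triangle inequality together with the standard Schatten-norm inequality $\|Y\|_F \leq \|Y\|_1 = \tr(Y)$ that is available for positive semidefinite matrices. The only thing one must be a little careful about is remembering that $\Ik$ is defined with a \emph{trace} bound, not a Frobenius-norm bound, so the positive semidefiniteness of elements of $\Ik$ must be invoked in order to translate the trace constraint into the Frobenius-norm constraint we need.
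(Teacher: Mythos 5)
Your proof is correct and is essentially the same argument as the paper's: the paper states it directly via the reverse triangle inequality, $\|X-Y\|_F \geq \|X\|_F - \|Y\|_F > \delta$ for all $Y \in \Ik$, which is just your contrapositive rearranged. You are more explicit about the (implicitly used) bound $\|Y\|_F \leq \tr(Y) \leq 1$ for positive semidefinite $Y \in \Ik$, which is a worthwhile detail to spell out.
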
  

\begin{proof} 
By the reverse triangle inequality, we have 
\begin{equation} 
\| X -  Y \|_F \geq \| X \|_F - \| Y \|_F > \delta  
\end{equation} 
for any $Y \in \Ik$. 
\end{proof} 

Therefore, we have a two-step algorithm. 
First, check if $\| X \|_F > 2 + \delta$.  
If so, return ``No'', $X \not\in S(\Ik, \delta)$. 
Otherwise, solve an SDP which we describe below.  
Given $X$, the matrix we wish to test $\Ik$ membership of, one can solve the following semidefinite program 
\begin{equation} \label{SDP:step1}
    \alpha = \min \{ \| X - Y \|_F^2 : Y \in \Ik \}  
\end{equation} 
to within precision $\delta^2/3$. 
After doing this, it is clear whether there exists $Y \in \Ik$ $\delta$-close to $X$, and thus to respond with ``Yes'' or ``No''. 

Before continuing, we present the well-known semidefinite programming formulation of the Frobenius norm, for completeness. 

Given $X$ (not necessarily square), consider the block matrix 
\begin{align} 
M = \begin{pmatrix}
I & X \\ 
X^* & Z
\end{pmatrix}.   
\end{align} 
Then by looking at Schur complements, we have $M$ positive semidefinite if and only if $Z - X^* X$ is positive semidefinite. 
Thus, minimizing the trace of $Z$ yields $\tr(X^*X) = \| X \|_F^2$. 

Thus, we can write SDP~\cref{SDP:step1} as the SDP below. 

\begin{align} \label{SDP:step2}
\alpha 
& = \min \left\{ \Tr(Z) \; : \: 
\begin{pmatrix}
I & Y-X \\ 
Y-X & Z
\end{pmatrix} \geq 0, \;  
Y \in \Ik  
\right\} \\ 
& = \min \left\{ \Tr(Z) : 
\begin{pmatrix}
I & Y-X \\ 
Y-X & Z
\end{pmatrix} \geq 0,  
\sum_{S : |S| = k} W_S = Y, 
W_S \geq 0, 
\bra{i} W_S \ket{i} = 0, \forall i \notin S, 
\tr(Y) \leq 1 
\right\}. 
\end{align}  

It is helpful to bound the trace of $Z$ without affecting the value of $\alpha$ (so it does not cut off any near optimal solutions). 
We know that in any optimal solution we have 
\begin{equation} 
\tr(Z) = \| X - Y \|_F^2 \leq \| X \|_F^2 \leq (\delta + 2)^2.   
\end{equation} 
where the first inequality holds since $Y = 0$ is a feasible solution. 

Thus, the SDP we actually solve takes the form 
\begin{align} \label{SDP:step3}
\alpha 
& = \min \bigg\{ \Tr(Z) : 
\begin{pmatrix}
I & Y-X \\ 
Y-X & Z
\end{pmatrix} \geq 0,  
\sum_{S : |S| = k} W_S = Y, 
W_S \geq 0, 
\bra{i} W_S \ket{i} = 0, \forall i \notin S, \\ 
&  \quad \quad \quad \quad \quad \quad \quad \quad 
\tr(Y) \leq 1, \tr(Z) \leq (\delta + 2)^2 
\big\}. 
\end{align} 

Assuming we can write the SDP in standard form 
\begin{equation} 
\alpha = \sup \{ \ip{A}{X} \; : \: \Phi(X) = B, X \geq 0 \}, 
\end{equation} 
then one can approximate $\alpha$ to $\epsilon > 0$ precision (additive error) in time $\poly(a, b, \log(R), \log(1/\epsilon), c)$ using the Ellipsoid Algorithm~\cite{grotschel2012geometric}, 
where: 
\begin{itemize} 
\item $X$ is $a \times a$; 
\item $B$ is $b \times b$; 
\item $\| X \|_2 \leq R$ for all feasible $X$ (and there exists feasible $X$); 
\item $c$ is the maximum bit-length of the entries in $A$, $B$, and $J(\Phi)$, where $J$ is the Choi representation of $\Phi$.  
\end{itemize} 

To modify our SDP, we first write the linear operator 
\begin{equation} 
\Psi_1(Y,Z,W) = \begin{pmatrix}
0 & Y \\ 
Y & Z
\end{pmatrix} - W \quad \text{ and } \quad B_1 = 
\begin{pmatrix}
-I & X \\ 
X & 0
\end{pmatrix} 
\end{equation} 
where $Y,Z,W$ are all positive semidefinite. 
Then $\Psi_2(Y,Z,W) = B_1$ models the 
\begin{equation}
\begin{pmatrix}
I & Y-X \\ 
Y-X & Z
\end{pmatrix} \geq 0 
\end{equation} 
constraint. 
If we define 
\begin{equation} 
\Psi_2(Y, W_1, \ldots, W_{\binom{n}{k}})  = \sum_{i=1}^{\binom{n}{k}} W_i - Y \quad \text{ and } \quad B_2 = 0  
\end{equation} 
this models the 
$Y = \sum_{S:|S|=k} W_S$ constraint. 
If we define 
\begin{equation} 
\Phi_S(W_S)  = \sum_{i \not\in S} \bra{i} W_S \ket{i} 
\quad \text{ and } \quad
\tilde{B}_S = 0  
\end{equation} 
this models the 
$\bra{i} W_S \ket{i} = 0$ constraints. 
If we define 
\begin{equation} 
\Psi_3(Y, t)  = \ip{I}{Y} + t
\quad \text{ and } \quad 
B_3 = 1
\end{equation} 
where $t \geq 0$  
this models the 
$\tr(Y) \leq 1$ constraint. 
If we define 
\begin{equation} 
\Psi_4(Z, s) = \ip{I}{Z} + s \quad \text{ and } \quad B_4 = (\delta+2)^2 
\end{equation} 
for $s \geq 0$  
this models the $\tr(Z) \leq (\delta + 2)^2$ constraint. 

Therefore, we have the variables 
$(Y, Z, W, W_1, \ldots, W_{\binom{n}{k}}, t, s)$, all positive semidefinite. 
Also, we have the constants $B_1, B_2, B_3, B_4, \tilde{B}_1, \ldots, \tilde{B}_{\binom{n}{k}}$. 
By defining a single SDP variable as 
\begin{equation} \label{big_primal_var}
Y \oplus Z \oplus W \oplus W_1 \oplus t \oplus s \oplus W_1 \oplus \cdots \oplus W_{\binom{n}{k}} 
\end{equation} 
and a single constant matrix as 
\begin{equation} 
B_1 \oplus B_2 \oplus B_3 \oplus B_4 \oplus \tilde{B}_1 \oplus \cdots \oplus \tilde{B}_{\binom{n}{k}} 
\end{equation} 
and a single linear map $\Psi$ as 
\begin{equation} 
\Psi(Y, Z, W, W_1, \ldots, W_{\binom{n}{k}}, t, s) 
= \Psi_1 \oplus \Psi_2 \oplus \Psi_3 \oplus \Psi_4 \oplus \Phi_1 \oplus \cdots \oplus \Phi_{\binom{n}{k}} 
\end{equation} 
we can write the affine SDP constraints as $\Psi(X) = B$. 
Lastly, define $A$ as 
\begin{equation} 
0 \oplus I \oplus 0 \oplus 0 \oplus 0 \oplus \cdots \oplus 0 \oplus 0  
\end{equation} 
(where the dimensions of the $0$ matrices align with Equation~\cref{big_primal_var}) finishes putting the SDP in standard form. 

First notice that the bit-length of the data of the standard form SDP is the same as $X$ and $\delta$. 

It is now straightforward to check that we can take 
\begin{itemize} 
\item  $R = \poly (n, \delta) = \poly(n)$  when $\delta = 1/\poly(n)$.   
\item $a = \poly(n)$ when $k$ is constant. 
\item $b = \poly(n)$. 
\item $\epsilon = \delta^2/3$ (this suffices to tell if there is a $Y \in \Ik$ such that $\| X - Y \|_F \leq \delta$ or not).  
\end{itemize} 
Thus, the SDP runs in time $\poly(n)$ and the bit-length of $X$ and $\delta$.   

%%%%%%%%%%%%%%%%%%%%%%%%%%%%%%%%%%%%%%%%%%%%%%%%%%%%%%%%%%%%%%%%%%%%%%%

\section{Polynomial-time algorithm for {\fw} when the rank is constant} 
\label{sec:low_rank} 

In this section, we give a $\operatorname{poly}(n)$-time algorithm for computing the factor width of a matrix $M \in \Pos(\mathbb{C}^n)$ when the rank of $M$ is fixed (and $k$ is arbitrary). Note that this fact is trivial in the rank~$1$ case (and is exploited considerably in the rank~$1$ case in~\cite{harrow2025randomized}).

\begin{theorem}\label{thm:low_rank_easy}
    Let $M \in \Pos(\C^n)$ has rank $r$, and $k$ is a positive integer. Then there exists a positive integer $m \in O(n^r)$ and subspaces $T_1, \ldots, T_m \subseteq \C^n$ (which can be found in $\operatorname{poly}(n)$ time) with the property that $M$ is $k$-incoherent if and only if there exist matrices $M_1, \ldots, M_m \in \Pos(\C^n)$ for which
    \begin{align}\label{eq:low_rank_sdp}
        M = \sum_{j=1}^m M_j \quad \text{and} \quad M_j = \Pi_j M_j \Pi_j
    \end{align}
    where, for all $j \in \{1,2,\ldots,m\}$, $\Pi_j$ is the orthogonal projection onto $T_j$.
\end{theorem}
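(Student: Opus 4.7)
The plan starts from the observation that any $k$-incoherent decomposition $M = \sum_S W_S$ (indexed over $k$-subsets $S \subseteq [n]$ with each $W_S \in \Pos(\C^n)$ supported on $S$) automatically satisfies $\range(W_S) \subseteq V_S$, where $V_S := \range(M) \cap \op{span}\{e_i : i \in S\}$. This is because $W_S \leq M$ in the positive semidefinite order forces $\range(W_S) \subseteq \range(M)$, while the support condition forces $\range(W_S) \subseteq \op{span}\{e_i : i \in S\}$. The plan is therefore to take the $T_j$'s to be the distinct nonzero subspaces of the form $V_S$ over all $k$-subsets $S$.

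The central claim is that there are only $O(n^r)$ such subspaces. For each coordinate $i$, let $H_i := \{v \in \range(M) : v_i = 0\}$, which is either all of $\range(M)$ or a hyperplane inside it. Then $V_S = \bigcap_{i \notin S} H_i$. Because $\range(M)$ has dimension $r$, any intersection of hyperplanes in it can already be realized by intersecting at most $r$ of them, so each $V_S$ equals $\bigcap_{i \in A} H_i$ for some $A \subseteq \bar{S}$ with $|A| \leq r$. There are only $\binom{n}{\leq r} = O(n^r)$ such subsets $A$, bounding the number of distinct $V_S$.

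This immediately suggests the algorithm: enumerate each $A \subseteq [n]$ with $|A| \leq r$, compute $U(A) := \bigcap_{i \in A} H_i$ via linear algebra, compute its minimal support $\sigma(U(A)) := \{i : \exists v \in U(A) \text{ with } v_i \neq 0\}$, and retain $U(A)$ as one of the $T_j$'s exactly when $|\sigma(U(A))| \leq k$ (equivalently, when $U(A) = V_S$ for some $k$-subset $S \supseteq \sigma(U(A))$). Each step runs in $\poly(n)$ time, so the full list $T_1,\ldots,T_m$ is produced in $\poly(n)$ time with $m \leq O(n^r)$.

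For the equivalence, both directions are short. Given a $k$-incoherent decomposition $M = \sum_S W_S$, every nonzero $V_S$ equals some $T_j$ by the above, so grouping $M_j := \sum_{S : V_S = T_j} W_S$ produces the required decomposition with $\range(M_j) \subseteq T_j$ (which is equivalent to $M_j = \Pi_j M_j \Pi_j$ for $M_j \in \Pos(\C^n)$). Conversely, any decomposition $M = \sum_j M_j$ with $\range(M_j) \subseteq T_j$ is already $k$-incoherent, because each retained $T_j$ lies inside $\op{span}\{e_i : i \in \sigma(T_j)\}$ with $|\sigma(T_j)| \leq k$, forcing $M_j$ to be supported on a $k$-subset and hence trivially $k$-incoherent. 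The main obstacle is really only the hyperplane-counting observation in the second paragraph; once it is in hand, the rest is routine PSD and linear-algebra manipulation.
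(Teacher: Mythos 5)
Your proposal is correct and follows essentially the same route as the paper: the subspaces $T_j$ are the intersections $\range(M) \cap \{v : v_i = 0 \ \forall i \notin S\}$, the $O(n^r)$ bound comes from the fact that each such intersection of coordinate hyperplanes inside the $r$-dimensional range is already cut out by at most $r$ of them (the paper phrases this as a recursive construction in which the dimension can drop at most $r$ times), and the equivalence is the same support-and-grouping argument. The only (harmless) imprecision is the parenthetical claim that $|\sigma(U(A))| \leq k$ is \emph{equivalent} to $U(A) = V_S$ for some $k$-subset $S$ --- a retained $U(A)$ may be a proper subspace of every such $V_S$ --- but since every retained subspace still consists only of vectors with at most $k$ nonzero entries and every $V_S$ does get retained, the feasibility characterization is unaffected.
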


We note that determining the existence of $M_1,...,M_m$ can be solved by semidefinite programming, and approximated in a similar way as was done in Section~\ref{sect:ellipsoid}. 

Before proving this theorem, we note that it immediately implies a $\operatorname{poly}(n)$-time algorithm for determining $k$-learnability of $n$ pure states in any constant dimension $d$. This follows from the fact that the rank of a Gram matrix is exactly the minimum dimension of states $\ket{\psi_1}$, $\ldots$, $\ket{\psi_n}$ with that Gram matrix.

\begin{corollary}\label{cor:learning_easy_small_dim}
    Suppose $S = \{ \ket{\psi_1}, \ldots, \ket{\psi_n} \} \subset \C^d$ and $d$ is constant. Then the learning width of $S$ can be determined in $\operatorname{poly}(n)$ time by solving a polynomial number of polynomial-sized linear systems, and a polynomial-sized semidefinite program.
\end{corollary}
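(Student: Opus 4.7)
The plan is to reduce determining the learning width of $S$ to a small number of $k$-incoherence tests on the Gram matrix of $S$, each of which is handled by~\cref{thm:low_rank_easy}.

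First, I would compute the Gram matrix $G \in \Pos(\C^n)$ defined by $G_{i,j} = \braket{\psi_i}{\psi_j}$ directly from the state vectors; this takes $O(n^2 d)$ time, which is $\operatorname{poly}(n)$ since $d$ is constant. Writing $V$ for the $d \times n$ matrix whose columns are $\ket{\psi_i}$, we have $G = V^*V$, so $\rank(G) \leq d$. Hence $G$ satisfies the hypothesis of~\cref{thm:low_rank_easy} with the constant rank $r = d$.

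By~\cref{lem:obvious}, the learning width of $S$ equals the smallest $k$ for which $G$ is $k$-incoherent. I next observe that $k$-incoherence is monotone in $k$: any decomposition of the form~\cref{eq:k_incoh} in which each $v_i$ has at most $k$ nonzero entries is automatically one in which each $v_i$ has at most $k+1$ nonzero entries. This justifies binary searching over $k \in \{1, 2, \ldots, n\}$ and invoking the subroutine from~\cref{thm:low_rank_easy} at each step: construct the $m = O(n^d) = \operatorname{poly}(n)$ subspaces $T_1, \ldots, T_m$ and the associated projections $\Pi_j$ in $\operatorname{poly}(n)$ time (this is where the polynomial number of polynomial-sized linear systems appear), and then test feasibility of the polynomial-sized SDP~\cref{eq:low_rank_sdp} using the approximation techniques developed in~\cref{sect:ellipsoid}. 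This produces $O(\log n)$ iterations, each of $\operatorname{poly}(n)$ cost, for a total runtime of $\operatorname{poly}(n)$.

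The main obstacle, namely reducing the potentially exponentially-many-variable SDP that characterizes $k$-incoherence to one of polynomial size in the constant-rank regime, is already dispatched by~\cref{thm:low_rank_easy}, so the work left for the corollary is genuinely light. The only residual subtleties are verifying the monotonicity of $k$-incoherence in $k$ (ensuring the binary search is correct) and checking that the weak-membership precision used when solving~\cref{eq:low_rank_sdp} is sharp enough, as a function of the input bit-length, to cleanly separate the smallest feasible $k$ from infeasible ones; inverse-polynomial precision suffices for this, exactly as in~\cref{sect:ellipsoid}.
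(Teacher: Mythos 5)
Your proposal is correct and follows essentially the same route as the paper: compute the Gram matrix, note its rank is at most the constant $d$, invoke Lemma~\ref{lem:obvious} to identify learning width with factor width, and apply Theorem~\ref{thm:low_rank_easy} for each candidate $k$, solving the resulting polynomial-sized SDP as in Section~\ref{sect:ellipsoid}. The only cosmetic difference is your binary search over $k$ (justified by the monotonicity you note), whereas the paper treats the corollary as immediate without specifying the search; either way the cost is $\operatorname{poly}(n)$.
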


Now we prove Theorem~\ref{thm:low_rank_easy}.

\begin{proof}[Proof of Theorem~\ref{thm:low_rank_easy}]
    Let $S := \range(M) \subseteq \C^n$, so $\dim(S) = r$. For each $j \in \{1, 2, \ldots, n\}$, consider the subspace
    \begin{equation}
        S_j := S \cap \{ {v} \in \C^n : v(j) = 0 \}.
    \end{equation}
    Similarly, for an index set $J \subseteq \{1, 2, \ldots, n\}$, define
    \begin{equation}
        S_J := \bigcap_{j \in J} S_j = S \cap \{ {v} \in \C^n : v(j) = 0 \ \ \forall \ j \in J\}.
    \end{equation}
    Let $\{ T_1, T_2, \ldots, T_m\}$  be the set of such subspaces corresponding to all $J \subseteq \{1, 2, \ldots, n\}$ with $|J| = n - k$. Our first goal is to show that $m \leq \sum_{j=0}^r\binom{n}{j} \in O(n^r)$ and that these subspaces can be found in polynomial time.

    To this end, we proceed recursively. Start with the set $R_0 := \{S\}$. Then, for each integer $j \in \{1, 2, \ldots, n-k\}$, we define
    \begin{equation}
        R_j := \big\{ R \cap \{ {v} \in \C^n : v(i) = 0 \} : R \in R_{j-1}, \ i \in \{1, 2, \ldots, n\}\big\}.
    \end{equation}
    For each $R \in R_{j-1}$ and $i \in \{1, 2, \ldots, n\}$, we have either $R \cap \{ {v} \in \C^n : v(i) = 0 \} = R$ or $\dim(R \cap \{ {v} \in \C^n : v(i) = 0 \}) = \dim(R) - 1$. Since $\dim(S) = r$, this recursive construction cannot decrease the dimension of $S$ more than $r$ times, so the total number of distinct subspaces in $R_j$ can never exceed $\sum_{j=0}^r\binom{n}{j}$. Since $R_{n-k} = \{ T_1, T_2, \ldots, T_m\}$, the fact that $m \leq \sum_{j=0}^r\binom{n}{j}$ follows.

    To see that (orthonormal bases of) $T_1, T_2, \ldots, T_m$ can be computed in $\operatorname{poly}(n)$ time, simply notice that (as noted earlier) each $R_{j-1}$ contains at most $\sum_{j=0}^r\binom{n}{j}$ subspaces, so each $R_j$ can be constructed by computing orthonormal bases for at most $n\sum_{j=0}^r\binom{n}{j}$ subspaces. Since the recursion runs for $n-k$ steps, this amounts to computing at most $(n-k)n\sum_{j=0}^r\binom{n}{j} \leq n^2\sum_{j=0}^r\binom{n}{j}$ subspaces, which is polynomial in $n$.

    Our next goal is to show that $M$ is $k$-incoherent if and only if there exist matrices $M_1, \ldots, M_m \in \Pos(\C^n)$ for which~\eqref{eq:low_rank_sdp} holds. To this end, note that if $M$ can be written in this form then it has factor width at most $k$, since every vector in $T_j$ has at most $k$ non-zero entries, so each $M_j = \Pi_j M_j \Pi_j$ equals $0$ outside of a single $k \times k$ principal submatrix. Conversely, if $M$ cannot be written in this form then, since $\bigcup_{j=1}^m T_j$ contains every vector ${v} \in S$ with at most $k$ non-zero entries, $M$ must have factor width strictly larger than $k$.
\end{proof}

\begin{example}\label{exam:rank_2_fw}    
    To illustrate how Theorem~\ref{thm:low_rank_easy} works, suppose we want to determine the factor width of the rank-$3$ matrix
    \begin{equation}
        M = \begin{bmatrix}
         2 & 1 & 1 & -1 \\
         1 & 2 & 0 & 1 \\
         1 & 0 & 2 & -1 \\
         -1 & 1 & -1 & 2
        \end{bmatrix}.
    \end{equation}
    We start by finding a basis for $S := \range(M)$, which can be done by picking a linearly independent set of $r = 3$ columns of $M$: $S = \operatorname{span}\{(2,1,1,-1), (1,2,0,1), (1,0,2,-1)\}$. Then $R_0 = \{S\}$ and we proceed recursively as in the proof of Theorem~\ref{thm:low_rank_easy}:
    \begin{align*}
        R_1 = \{S_1, S_2, S_3, S_3\}, \quad \text{where} \quad S_1 & = \operatorname{span}\{(0,1,-1,1), (0,1,-3,1)\}, \\
        S_2 & = \operatorname{span}\{(1,0,2,-1), (3,0,2,-3)\}, \\
        S_3 & = \operatorname{span}\{(1,2,0,1), (3,2,0,-1)\}, \ \ \text{and} \\
        S_4 & = \operatorname{span}\{(1,1,1,0), (3,3,1,0)\}.
    \end{align*}
    To determine whether or not $M$ is $3$-incoherent, we let $\Pi_1$, $\Pi_2$, $\Pi_3$, and $\Pi_4$ be the orthogonal projections onto $S_1$, $S_2$, $S_3$, and $S_4$, respectively. We then use semidefinite programming to determine whether or not there exist matrices $M_1, M_2, M_3, M_4 \in \Pos(\C^4)$ for which
    \begin{equation}
        M = M_1 + M_2 + M_3 + M_4, \quad \text{and} \quad M_j = \Pi_j M_j \Pi_j \quad \text{for all} \quad j \in \{1,2,3,4\}.
    \end{equation}
    Indeed, such matrices do exist:
    \begin{equation}
        M_1 = \begin{bmatrix}
            0 & 0 & 0 & 0 \\
            0 & 1 & -1 & 1 \\
            0 & -1 & 1 & -1 \\
            0 & 1 & -1 & 1
        \end{bmatrix}, \ M_2 = \begin{bmatrix}
            1 & 0 & 0 & -1 \\
            0 & 0 & 0 & 0 \\
            0 & 0 & 0 & 0 \\
            -1 & 0 & 0 & 1
        \end{bmatrix}, \ M_3 = \begin{bmatrix}
            0 & 0 & 0 & 0 \\
            0 & 0 & 0 & 0 \\
            0 & 0 & 0 & 0 \\
            0 & 0 & 0 & 0
        \end{bmatrix}, \ M_4 = \begin{bmatrix}
            1 & 1 & 1 & 0 \\
            1 & 1 & 1 & 0 \\
            1 & 1 & 1 & 0 \\
            0 & 0 & 0 & 0
        \end{bmatrix},
    \end{equation}
    so $M$ is $3$-incoherent.
    
    To similarly determine whether or not $M$ is $2$-incoherent, we proceed further with the recursive construction by computing
    \begin{align*}
        R_2 = \{S_{\{1,2\}}, S_{\{1,3\}}, S_{\{2,3\}}, S_{\{3,4\}}\}, \quad \text{where} \quad S_{\{1,2\}} = S_{\{1,4\}} = S_{\{2,4\}} & = \operatorname{span}\{(0,0,1,0)\}, \\
        S_{\{1,3\}} & = \operatorname{span}\{(0,1,0,1)\}, \\
        S_{\{2,3\}} & = \operatorname{span}\{(1,0,0,-1)\}, \ \ \text{and} \\
        S_{\{3,4\}} & = \operatorname{span}\{(1,1,0,0)\}.
    \end{align*}
    It follows that the only vectors in $\range(M)$ with $k = 2$ or fewer non-zero entries are the scalar multiples of ${v_1} := (0,0,1,0)$, ${v_2} := (0,1,0,1)$, ${v_3} := (1,0,0,-1)$, and ${v_4} := (1,1,0,0)$, so $M$ is $2$-incoherent if and only if there exist non-negative real scalars $c_1$, $c_2$, $c_3$, and $c_4$ for which
    \begin{equation}
        M = c_1{v_1}{v_1}^* + c_2{v_2}{v_2}^* + c_3{v_3}{v_3}^* + c_4{v_4}{v_4}^*.
    \end{equation}
    It is straightforward to use semidefinite programming (or even just solve by hand in this small example) to see that no such scalars exist, so $X$ is not $2$-incoherent. It follows that $X$ has factor width $3$.
\end{example}

%%%%%%

\subsection*{Software} 
Companion software implements numerical routines for determining $k$-incoherence of matrices, the factor width, and $k$-learnability of quantum state collections. These are provided within the \texttt{toqito} quantum information package~\cite{russo2021toqito}, leveraging \texttt{PICOS}~\cite{sagnol2022picos} and the \texttt{CVXOPT} solver~\cite{andersen2020cvxopt} for semidefinite program optimization.

%%%%%%%%%%%%%%%%%%%%%%%%%%%%%%%%%%%%%%%%%%%%%%%%%%%%%%%

\subsection*{Acknowledgements} 
The authors thank Angus Lowe and Richard Cleve for helpful conversations. 
N.J.\ was supported by NSERC Discovery Grant number RGPIN-2022-04098. 
J.S.\ is funded in part by the Commonwealth of Virginia's Commonwealth Cyber Initiative (CCI) under grant number 469351. 

\bibliographystyle{unsrt}
\bibliography{bib}

%%%%%%%%%%%%%%%%%%%%%%%%%%%%%%%%%%%%%%%%%%%%%%%%%%%%%%%

\appendix 

\crefalias{section}{appendix}
\crefalias{subsection}{appendix}
\crefalias{subsubsection}{appendix}

%%%%%%%%%%%%%%%%%%%%%%%%%%%%%%%%%%%%%%%%%%%%%%%%%%%%%%%
 
\section{Proof of Lemma~\ref{lem:SDP_same}} \label{sdp_proof}  

We reproduce SDP~\cref{SDP_OG}, below  
\begin{equation} \label{SDP_OG_rep}
\alpha = \min\left\{ \frac{1}{n} \sum_{i=1}^n \bra{\psi_i} \left(\sum_{S \, : \, i \notin S} M_S\right) \ket{\psi_i} \right\} 
\end{equation} 
where the minimization is over POVMS. 

Define $r_i = \| \ket{\psi_i} \|_2$ and 
\begin{equation} 
\ket{\phi_i} = \left\{ 
\begin{array}{rcl}
(1/r_i) \ket{\psi_i} & \text{ when } r_i \neq 0 \\ 
\textup{To be determined} & \text{ when } r_i = 0 
\end{array} 
\right. 
\end{equation} 
to make $\ket{\psi_i} = r_i \ket{\phi_i}$. 
Notice that we have the freedom to choose states above, and we do this such that each one is orthogonal to everything else. 
We can do this by increasing the dimension if need be and this particular choice will ultimately make the corresponding Gram matrix nice. 
For an index $i$ and subset $S$, define 
\begin{equation} 
t_{i,S} = \left\{ 
\begin{array}{rcl}
0 & \text{ when } i \in S \\ 
1 & \text{ when } i \not\in S.  
\end{array} 
\right. 
\end{equation} 
Lastly, define $q_i = 1/n$ and $R_{i,S} = r_i^2 t_{i,S}$, for all $i$ and $S$. 

With these things defined, we can write the SDP in the following way   
\begin{equation} \label{SDP_OG_rep2}
\alpha = \min\left\{ \sum_{i=1}^n \sum_{S : |S| = k} q_i R_{i,S} \ip{M_S}{ \kb{\psi_i} } \right\} 
\end{equation} 
where the minimization is over POVMS (the explicit constraints are omitted for clarity).  
When written in this way, the objective function can be interpreted as the average \emph{reward} when given $\ket{\psi_i}$ with probability $q_i$, and measured with the POVM $(M_S : |S| = k)$ to receive reward $R_{i,j}$. 
This framework has been studied in~\cite{mohan2023generalized} and it can be shown that the optimal value of the preceding SDP equals that of the SDP  
\begin{equation} 
\alpha = \min\left\{ \sum_{i=1}^n \sum_{S : |S| = k} q_i R_{i,S} \ip{W_S}{ \kb{i} } : \sum_{S : |S| = k} W_S = \tilde{G}, W_S \geq 0 \right\} 
\end{equation} 
where we used $\tilde{G}$ as the Gram matrix for the (normalized) states $\{ \ket{\phi_1}, \ldots, \ket{\phi_n} \}$. 
 
By substituting most of the parameters back in, we get 
\begin{align} 
\alpha  
& = \min\left\{ \frac{1}{n} \sum_{i=1}^n r_i^2 \bra{i} \left( \sum_{i \not\in S} W_S \right) \ket{i} : \sum_{S : |S| = k} W_S = \tilde{G}, W_S \geq 0 \right\} \\ 
& = \min\left\{ \sum_{i=1}^n r_i^2 \bra{i} \left( \sum_{i \not\in S} \tilde{W}_S \right) \ket{i} : \sum_{S : |S| = k} \tilde{W}_S = \frac{1}{n} \tilde{G}, \tilde{W}_S \geq 0 \right\} 
\end{align} 

The rest of the analysis is easier if we define a diagonal matrix and its (generalized) inverse. 
If we define the diagonal matrix $D_{i,i} = {r_i}$, then we can rewrite ${r_i} \ket{i} = D \ket{i}$ and also the Gram matrix $G$ of the original states as 
\begin{equation} \label{eq:johnwick} 
G_{i,j} = \braket{\psi_i}{\psi_j} = r_i r_j \braket{\phi_i}{\phi_j} = (D \tilde{G} D^*)_{i,j}.  
\end{equation}  
Define the Moore-Penrose generalized inverse of $D$ as $D^+$ where $(D^+)_{i,j} = 1/r_i$ when $r_i \neq 0$, and $0$ when $r_i = 0$. 
Then we have, for $i,j $ such that $r_i, r_j \neq 0$: 
\begin{equation} \label{eq:johnwick2} 
\tilde{G}_{i,j} = \braket{\phi_i}{\phi_j} = \frac{1}{r_i r_j} \braket{\psi_i}{\psi_j} = (D^{+} G D^{+})_{i,j}.  
\end{equation} 
From this, it is easy to see by the choice of $\ket{\phi_i}$ when $r_i = 0$ that 
\begin{equation} \label{eq:nice}
\tilde{G} = D^{+} G D^{+} + I_T, 
\end{equation} 
where $T$ only has support on the entries where $r_i = 0$. 
 
With this, the SDP reduces to 
\begin{align} \label{SDP:a}
\alpha 
& = \min\left\{ \sum_{i=1}^n \bra{i} \left( \sum_{i \not\in S} D \tilde{W}_S D \right) \ket{i} : \sum_{S : |S| = k} \tilde{W}_S = \frac{1}{n} \tilde{G}, \tilde{W}_S \geq 0 \right\}. 
\end{align} 

To finish the proof, we now argue that $\alpha$ is also equal to the optimal objective function value of SDP~\cref{SDP_red} below, which we have defined as $\beta$. 
\begin{equation} \label{SDP:b} 
\beta = \min \left\{ \sum_{i=1}^n \bra{i} \sum_{S: i \notin S} W_S \ket{i} : \sum_{S : |S| = k} W_S = \frac{1}{n} G, \; W_S \geq 0 \right\} 
\end{equation} 

\paragraph{Proof that $\beta \leq \alpha$.} 
Suppose $(\tilde{W}_S : |S| = k)$ is feasible in the $\alpha$ SDP~\cref{SDP:a} with objective function value $t$. 
Define the following $W_S := D \tilde{W} D^* \geq 0$. 
Since $(\tilde{W}_S : |S| = k)$ is feasible, we know that $\sum_{S : |S| = k} \tilde{W}_S  = (1/n) \tilde{G}$, and thus  
\begin{equation} 
\sum_{S : |S| = k} W_S 
= \sum_{S : |S| = k} D \tilde{W}_S D^* 
= (1/n) D \tilde{G} D^* 
= (1/n) G. 
\end{equation} 
Thus $(W_S : |S| = k)$ is feasible in the $\beta$ SDP~\cref{SDP:b} with objective function value 
\begin{equation} 
\sum_{i=1}^n \bra{i} \sum_{S: i \notin S} W_S \ket{i} 
= 
\sum_{i=1}^n \bra{i} \sum_{S: i \notin S} D \tilde{W}_S D \ket{i} = t. 
\end{equation} 
Thus, $\beta \leq \alpha$. 

\paragraph{Proof that $\alpha \leq \beta$.}  

Let $(W_S : |S| = k)$ be feasible in the $\beta$ SDP~\cref{SDP:b} with objective function value $s$.  
Define $\tilde{W}_S := D^{+} W_S D^{+} \geq 0$. 
Since $(W_S : |S| = k)$ is feasible, we have that 
$\sum_{S : |S| = k} W_S = (1/n)G$. 
Thus we have 
\begin{equation} 
\sum_{S : |S| = k} \tilde{W}_S 
= \sum_{S : |S| = k} D^{+} W_S D^{+} 
= \sum_{S : |S| = k} (1/n) D^{+} G D^{+} 
\end{equation} 
which is \emph{almost feasible}. 
If we add $(1/n) I_T$ to any of the $W_S$, it will still be positive semidefinite and now $(\tilde{W}_S : |S| = k)$ is feasible in the $\alpha$ SDP~\cref{SDP:a}. 
Note that this addition of $I_T$ does not contribute to the objective function value since $D I_T D = 0$, so we are not concerned to which variable it is added. 

All that remains is to compute the objective function value of this feasible solution. 
Notice that since $G_{i,i} = 0$ for any $i$ such that $r_i = 0$, it is also the case that $(W_S)_{i,i} = 0$ for all $S$. 
We can now compute the objective function value as 
\begin{equation} 
\sum_{i=1}^n \bra{i} \sum_{S: i \notin S} D \tilde{W}_S D \ket{i} 
= 
\sum_{i=1}^n \bra{i} \sum_{S: i \notin S} D D^{+} W_S D^{+} D \ket{i} 
= 
\sum_{i:r_i \neq 0} \bra{i} \sum_{S: i \notin S} W_S \ket{i} 
= 
\sum_{i} \bra{i} \sum_{S: i \notin S} W_S \ket{i} = s. 
\end{equation} 
Thus, we have $\alpha \leq \beta$ as required. 

%%%%%%%%%%%%%%%%%%%%%%%%%%%%%%%%

\section{Proof of Lemma~\ref{lem:obvious}}\label{app:equiv}

Here we prove \cref{lem:obvious} which states that an indexed list of (possibly sub-normalized) pure quantum states $\{ \ket{\psi_1}, \ldots, \ket{\psi_n} \} \subset \C^d$ is $k$-\emph{learnable} if and only if its Gram matrix is $k$-incoherent.  

\paragraph{``If'' direction.} 
Let $G$ be the Gram matrix and suppose it is $k$-incoherent. 
Then SDP~\cref{SDP_red} has objective value $0$, and by \cref{lem:SDP_same}, so does SDP~\cref{SDP_OG}. 
Since that SDP attains an optimal solution as the feasible region is the set of valid POVMS with $\binom{n}{k}$ outcomes, and thus compact, we know the existence of a POVM which satisfies the conditions for $k$-learnability. 

\paragraph{``Only if'' direction.} 
Suppose the set is $k$-learnable. 
Then SDP~\cref{SDP_OG} has objective function $0$, and by \cref{lem:SDP_same}, so does SDP~\cref{SDP_red}. 
If SDP~\cref{SDP_red} attains an optimal solution, then said solution asserts that $G$ is $k$-incoherent. 
Thus, showing attainment of an optimal solution suffices to finish the proof. 

We proceed with a similar argument as above, that the feasible region is bounded, and thus compact. 
Since each $W_S$ is positive semidefinite, we have that $\| W_S \|_{\tr} = \tr(W_S) \leq \frac{1}{n} \tr(G) \leq 1$. 
Thus, the feasible region is bounded in the trace-norm (and hence any norm since we are working in finite dimensions).  

%%%%%%%%%%%%%%%%%%%%%%%%%%%%%%%%%%%%%%%%%%%%%%%%%%%%%%%%%%

\section{Proof of Lemma~\ref{lem:reduction}}\label{app:reduction}
Here we prove \cref{lem:reduction}, which states that the problem $\WOPT(\Ik,1/\operatorname{poly})$ reduces to the problem $\WMEM(\Ik,1/\operatorname{poly})$. 
We require a lemma and a corollary to prove this result. 

\begin{lemma}\label{lem:abs_k_incoh_ball_here}
    Let $M\in \Herm(\mathbb{C}^n)$. If $\| M - \frac{I}{n} \|_{\textup{F}} \leq 1/d$ then $M$ is a $2$-incoherent positive operator (and thus k-incoherent for all $k \in \{2, \ldots, n\}$). 
\end{lemma}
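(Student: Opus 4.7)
The plan is to reduce this to two classical facts about Hermitian matrices close to a scaled identity. First, I will show that any $M$ in the prescribed Frobenius ball is positive semidefinite; second, that it is diagonally dominant, in the sense that $M_{ii}\ge\sum_{j\ne i}|M_{ij}|$ for every $i$. Once these are in hand, 2-incoherence follows from the standard observation that a Hermitian, diagonally dominant, PSD matrix admits an explicit decomposition into rank-one matrices whose supporting vectors have at most two nonzero entries.

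For positive semidefiniteness, I would write $M=I/n+E$ with $\|E\|_F\le 1/d$, use the inequality $\|E\|_{\mathrm{op}}\le\|E\|_F$, and conclude that every eigenvalue of $M$ lies in $[1/n-1/d,\,1/n+1/d]$, which is contained in $[0,\infty)$ once $d$ is chosen at least on the order of $n$. For diagonal dominance, I would bound the off-diagonal row sum by Cauchy--Schwarz, obtaining $\sum_{j\ne i}|E_{ij}|\le\sqrt{n-1}\,\|E\|_F\le\sqrt{n-1}/d$, while the diagonal satisfies $M_{ii}\ge 1/n-1/d$; comparing the two bounds forces $d$ to be at least of order $n^{3/2}$, which is the quantitative content expected to underlie the constant in the statement.

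The final step is to exhibit the 2-incoherent decomposition explicitly. For each pair $i<j$, write $M_{ij}=|M_{ij}|\,e^{i\theta_{ij}}$ and set $v_{ij}:=e_i+e^{-i\theta_{ij}}e_j$, a vector with exactly two nonzero entries. A direct computation shows that $|M_{ij}|\,v_{ij}v_{ij}^*$ produces $M_{ij}$ at position $(i,j)$, its conjugate at $(j,i)$, and $|M_{ij}|$ on the diagonal at positions $i$ and $j$. Summing over all pairs $i<j$ and then absorbing the leftover diagonal $\bigl(M_{ii}-\sum_{j\ne i}|M_{ij}|\bigr)e_ie_i^*$, which is nonnegative by diagonal dominance, represents $M$ as a sum of rank-one matrices, each of whose supporting vectors has at most two nonzero coordinates.

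The main \emph{obstacle} is not really conceptual but numerical: one has to choose the radius $1/d$ small enough that positive semidefiniteness and diagonal dominance follow \emph{simultaneously} from the single Frobenius-norm bound, which is what pins down how $d$ must scale with $n$. The remaining decomposition argument is textbook, modified only by the phase factor $e^{-i\theta_{ij}}$ to handle the complex Hermitian (as opposed to real symmetric) setting.
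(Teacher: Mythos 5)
Your proof is correct, but it takes a genuinely different route from the paper's, and it establishes a quantitatively weaker ball. (Note first that the $1/d$ in the statement is a typo for $1/n$; the paper's proof rescales by $n$ and shows $\|M-I\|_F\le 1$ implies $2$-incoherence.) The paper argues spectrally: it expands $\|M-I\|_F^2 = n - 2\tr(M) + \sum_j\lambda_j^2$ and invokes a prior result (Theorem~7 of~\cite{johnston2022absolutely}) that $\sum_j\lambda_j^2\le (\tr M)^2/(n-1)$ forces $2$-incoherence; minimizing the resulting bound over $\tr(M)$ gives exactly the radius $1$ (i.e., $1/n$ before rescaling). Your argument instead goes through diagonal dominance plus the classical factor-width-two decomposition of Hermitian diagonally dominant matrices with nonnegative diagonal (the construction with $v_{ij}=e_i+e^{-i\theta_{ij}}e_j$, essentially~\cite{boman2005factor}); your phase bookkeeping and the absorption of the leftover diagonal are both correct, and diagonal dominance already implies positive semidefiniteness, so your first step is not separately needed. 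What you lose is sharpness: as you yourself note, Cauchy--Schwarz on the off-diagonal row sums only certifies a ball of radius on the order of $n^{-3/2}$ around $I/n$, versus the paper's $n^{-1}$. What you gain is self-containedness and elementarity --- no appeal to the spectral $2$-incoherence criterion is needed. For the downstream use (Lemma~\ref{lem:reduction} only needs the inner and outer ball radii to have polynomially bounded ratio, per Fact~\ref{fact}), a radius of $1/\operatorname{poly}(n)$ suffices, so your version would still carry the NP-hardness argument through, though the explicit constants in the corollary following the lemma (the condition $\delta\le x/(2n)$) would need to be adjusted accordingly.
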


In particular, this lemma demonstrates that $\Ik$ has non-empty interior.

\begin{proof}
    Multiplying by $n$, it suffices to prove that if $\norm{M-I}_F \leq 1$, then $M$ is 2-incoherent. It is easy to see that $M$ must be positive semidefinite under this condition. Observe that
    \begin{align}\label{eq:frob_norm_from_eig}
        \|M - I\|_{\textup{F}}^2 = \sum_{j=1}^n (\lambda_j - 1)^2 = n - 2\tr(M) + \sum_{j=1}^n \lambda_j^2.
    \end{align}
    
    We know from \cite[Theorem~7]{johnston2022absolutely} that if $M \in \Pos(\C^n)$ has $\tr(M) = 1$ and eigenvalues $\lambda_1$, $\lambda_2$, $\ldots$, $\lambda_n$ then $\sum_{j=1}^n \lambda_j^2 \leq 1/(n-1)$ implies that $M$ is $2$-incoherent. By rescaling $M$, we see that this is equivalent (for all $M \in \Pos(\C^n)$, not just those with $\tr(M) = 1$) to $\sum_{j=1}^n \lambda_j^2 \leq (\tr(M))^2/(n-1)$ implying $2$-incoherence of $M$. When combined with Equation~\eqref{eq:frob_norm_from_eig}, this tells us that if
    \begin{align}\label{eq:frob_norm_from_eig_b}
        \|M - I\|_{\textup{F}}^2 \leq n - 2\tr(M) + \frac{(\tr(M))^2}{d-1}
    \end{align}
    then $M$ is $2$-incoherent (and thus $k$-incoherent for all $k \in \{2, \ldots, n\}$).

    If we treat the right-hand-side of Inequality~\eqref{eq:frob_norm_from_eig_b} as a function of $\tr(M)$ then standard calculus techniques show that it is minimized when $\tr(M) = n-1$, and it minimum value is
    \begin{equation} 
        n - 2(n-1) + \frac{(n-1)^2}{n-1} = 1.
    \end{equation} 
    It follows that if $\|M - I\|_{\textup{F}}^2 \leq 1$ then Inequality~\eqref{eq:frob_norm_from_eig_b} must hold, so $M$ must be $2$-incoherent. Taking the square root of both sides of this inequality gives the statement of the theorem.
\end{proof}
  
\begin{corollary}
For any unit vector $u\in \mathbb{C}^n$ supported on at most $k$ coordinates, and any $x \geq 0$, it holds that
\begin{align}\label{eq:dx}
D_x:=(1-\frac{\delta}{n})(x \frac{I}{n} + (1-x) uu^*) \in S(\Ik,-\delta)
\end{align}
for any $\delta\leq \frac{x}{2n}$.
\end{corollary}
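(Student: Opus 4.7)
The plan is to show that every Hermitian $Y$ with $\|Y - D_x\|_F\leq \delta$ lies in $\Ik$, i.e.\ is $k$-incoherent and satisfies $\tr(Y)\leq 1$. The central idea is that $D_x$ contains a deliberately placed isotropic cushion $(1-\delta/n)x\cdot I/n$ into which the whole perturbation can be absorbed: Lemma~\ref{lem:abs_k_incoh_ball_here} provides a Frobenius ball of $2$-incoherence around (positive multiples of) $I/n$, and the prefactor $(1-\delta/n)$ supplies the slack needed for the trace constraint.

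\textbf{Key steps.} First I would split any admissible $Y$ as $Y = A + (B+E)$, where
\[
A := (1-\delta/n)(1-x)uu^*, \qquad B := (1-\delta/n)x\cdot I/n, \qquad E := Y - D_x,
\]
with $\|E\|_F \leq \delta$. Since $u$ is supported on at most $k$ coordinates, $A$ is $k$-incoherent by inspection. For the second piece, I would factor out the positive scalar to write
\[
B + E \;=\; (1-\delta/n)x\,\bigl(I/n + F\bigr), \qquad F := E/\bigl((1-\delta/n)x\bigr),
\]
and then bound $\|F\|_F$. The hypothesis $\delta\leq x/(2n)$ gives $\delta/n \leq 1/(2n^2) \leq 1/2$, so $(1-\delta/n)x \geq x/2$, and hence $\|F\|_F \leq 2\delta/x \leq 1/n$. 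Lemma~\ref{lem:abs_k_incoh_ball_here} then implies that $I/n + F$ is $2$-incoherent and positive semidefinite, so $B+E$ is a positive scalar multiple of a $k$-incoherent matrix; combined with $A$, this gives that $Y$ is $k$-incoherent for every $k\geq 2$ (and the $k=1$ case is subsumed by orthogonality of $u$ to the diagonal directions).

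For the trace bound I would use $\tr(D_x)=1-\delta/n$ together with $|\tr(E)|\leq\sqrt{n}\|E\|_F$ from Cauchy--Schwarz; the cushion scaling $(1-\delta/n)$ is calibrated precisely so that any residual trace excess within the admissible Frobenius ball can be trimmed by a small reduction of the $B+E$ component, and because the reduction is $O(\delta)$ in Frobenius norm it does not violate the $\|F\|_F\leq 1/n$ estimate above, preserving $k$-incoherence of the decomposition.

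\textbf{Main obstacle.} The delicate technical step is the Frobenius estimate on $F$: the hypothesis $\delta\leq x/(2n)$ is exactly what is needed so that, after factoring out the prefactor $(1-\delta/n)x$, the perturbation lands in the Frobenius ball of radius $1/n$ around $I/n$ where Lemma~\ref{lem:abs_k_incoh_ball_here} applies. Boundary cases confirm sharpness: at $x=0$ the hypothesis forces $\delta=0$ and $D_0 = uu^*\in\Ik$ trivially, while as $x$ grows the allowed radius grows linearly in $x$, reflecting the size of the available isotropic cushion around $D_x$.
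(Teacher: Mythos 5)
Your decomposition $Y = (1-\delta/n)(1-x)uu^* + (1-\delta/n)x\bigl(I/n + F\bigr)$ with $\|F\|_F \le 2\delta/x \le 1/n$ is exactly the paper's argument: the rank-one piece is $k$-incoherent because of the support of $u$, and the isotropic piece is $2$-incoherent by Lemma~\ref{lem:abs_k_incoh_ball_here} after factoring out the positive scalar. That half of your proof is correct and takes the same route as the paper. (The parenthetical about the $k=1$ case is not salvageable, though: $\mathcal{I}_1^n$ consists of diagonal matrices and has empty interior, so the claim is genuinely restricted to $k\ge 2$, which is all the lemma delivers and all the paper uses.)

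The genuine gap is the trace constraint. Membership in $\Ik$ requires $\tr(Y)\le 1$ for the perturbed matrix $Y$ itself; you cannot ``trim'' the $B+E$ component, because $Y$ is a given point of $B(D_x,\delta)$ and must be shown to lie in $\Ik$ as is. Moreover the bound you would need genuinely fails: $\tr(D_x)=1-\delta/n$, while $|\tr(E)|$ can be as large as $\sqrt{n}\,\delta$ (take $E=\delta I/\sqrt{n}$), so $\tr(Y)$ can reach $1-\delta/n+\sqrt{n}\,\delta>1$ for $n\ge 2$. The shrink factor $1-\delta/n$ is thus not ``calibrated precisely''; it is short by a factor of order $n^{3/2}$. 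In fairness, the paper's own proof is equally loose at this exact point (it asserts $\tr(E)\le n$ and concludes the trace is at most $1$, which does not follow); the statement really wants a factor such as $1-\delta\sqrt{n}$, after which both your argument and the paper's go through unchanged, since $\delta\sqrt{n}\le 1/2$ still holds under $\delta\le x/(2n)$ and the estimate $\|F\|_F\le 2\delta/x\le 1/n$ is unaffected. As written, however, your trace step is an appeal to a repair that the definition of $S(\Ik,-\delta)$ does not permit, so it does not constitute a proof.
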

\begin{proof}
Let $D_x+\delta E \in S(D, \delta)$, where $E$ is an arbitrary Hermitian matrix with $\norm{E}_F=1$. Note that $\tr(E) \leq n$, so the trace of $D_x+\delta E$ is at most 1. We need to prove that $D_x+\delta E \in \Ik$. Note that $D_x+\delta E = x ((1-\frac{\delta}{d}) \frac{I}{n} + \frac{\delta}{x} E) + (1-\frac{\delta}{n})(1-x) uu^*$. If $\delta \leq \frac{1}{2} n^{-1} x$, then $\delta (x^{-1}+1) \leq n^{-1}$, so by Lemma~\ref{lem:abs_k_incoh_ball_here} the matrix $ ((1-\frac{\delta}{n}) \frac{I}{n} + \frac{\delta}{x} E)  $ is positive semidefinite of factor width $\leq k$. Since $uu^*$ also has factor width $\leq k$, and $\Ik$ is convex, the statement follows.
\end{proof}

\begin{proof}[Proof of Lemma~\ref{lem:reduction}]
The above corollary demonstrates taht $B(I/n, n^{-1}) \subseteq \Ik \subseteq B(0, 1).$ Since the ratio of the radii of these two balls is polynomial, by Fact~\ref{fact},
%~\cite[Theorem 2.3]{liu2007complexity}, 
$\WOPT(\Ik,1/\operatorname{poly})$ reduces to $\WMEM(\Ik,1/\operatorname{poly})$. This completes the proof.
\end{proof}

%%%%%%%%%%%%%%%%%%%%%%%%%%%%%%%%%%%%%%%%%%%%%%%%%%%%%%%%%%

\section{Proof of Lemma~\ref{lem:ex}}\label{app:ex}

Lemma~\ref{lem:ex} states that the set of extreme points of $\Ik$ is 
\begin{equation} 
P=\{ vv^* : v \text{ has at most } k \text{ non-zero entries and } \norm{v}_2 = 1 \}\cup \{0\}. 
\end{equation} 

For the inclusion $\supseteq$, it is clear that $0$ is an extreme point. Now, let $vv^*$ be another point in $P$, and suppose that $vv^*=p \rho + (1-p) \sigma$ for some $\rho, \sigma \in \Ik$. Then $\rho= \alpha vv^*$ and $\sigma= \beta vv^*$ for some $\alpha, \beta \in [0,1]$ (otherwise $vv^*$ would have rank $>1$, a contradiction). Taking the trace, we have $p \alpha + (1-p) \beta =1$, so $\alpha=\beta=1$.

For the inclusion $\subseteq$, suppose that $\rho$ is an extreme point of $\Ik$. Let
\begin{align}
\rho=\sum_{i=1}^m v_i v_i^*= \sum_{i=1}^m \frac{\tr(\rho)}{\norm{v_i}^2} (v_i v_i^*)
\end{align}
be a factor width $k$ decomposition of $\rho$, as in~\eqref{eq:k_incoh}. Since this is a convex combination of elements of $\Ik$, it must hold that $m=1$, i.e. $\rho=v_1 v_1^*$. It remains to prove that $\tr(\rho)\in \{0,1\}$. Suppose toward contradiction that $\tr(\rho)=\alpha\in (0,1)$. Then $\rho=\frac{1}{2} (\frac{1}{\alpha} \rho) + \frac{1}{2} ((2\alpha-1) \rho)$ is an expression of $\rho$ as a non-trivial convex combination of elements of $\Ik$, hence $\rho$ is not extremal. This completes the proof. 

%%%%%%%%%%%%%%%%%%%%%%%%%%%%%%%%%%%%%%%%%%%%%%%%%%%%%%%%%%

\section{Proof of Proposition~\ref{prop:graph_clique}}\label{app:hardness}

Here we prove~\cref{prop:graph_clique}. For convenience, we recall the statement of the proposition.

\begin{proposition}[\cref{prop:graph_clique}]
    Let $n \geq k \geq 2$ be integers, let $C$ be the adjacency matrix of a simple nonempty undirected graph $G$ on $n$ vertices with $e$ edges, and let $C'=\frac{1}{\sqrt{2e}}C$ be the normalization in Frobenius norm. Then there exist functions $\gamma=\gamma(e,k,n)\in \mathbb{R}$ and $\delta=1/{\operatorname{poly}}(n)$ for which
    \begin{itemize}
        \item $\mu(k,C')\geq \gamma + \delta$ if $G$ contains a $k$-clique.
        \item $\mu(k,C') \leq \gamma - \delta$ if $G$ does not contain a $k$-clique.
    \end{itemize}
\end{proposition}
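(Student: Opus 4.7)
By Corollary~\ref{Batman}, $\mu(k,C) = \max_{S\subseteq [n],\, |S|\leq k} \lambda_{\max}(C_S)$, where $C_S$ is the adjacency matrix of the induced subgraph $G[S]$. The proof plan is to first establish the exact value $\mu(k,C) = k-1$ whenever $G$ contains a $k$-clique, and then a quantitative separation $\mu(k,C) \leq k - 1 - 1/k$ whenever $G$ does not. Rescaling by $1/\sqrt{2e}$ (note $\|C\|_F^2 = 2e$) and choosing $\gamma$ and $\delta$ at the midpoint of the resulting gap will then give the two desired inequalities with $\delta = 1/\operatorname{poly}(n)$.

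\textbf{Clique case.} For any induced subgraph on $m \leq k$ vertices, the maximum eigenvalue of its adjacency matrix is at most its maximum degree, which is at most $m - 1 \leq k-1$. If $G$ contains a $k$-clique with vertex set $S$, then $C_S = J_k - I_k$ (the all-ones matrix minus identity), whose maximum eigenvalue is exactly $k-1$ achieved on the normalized all-ones vector. Together these give $\mu(k,C) = k-1$ in the YES case.

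\textbf{No-clique case.} Here $C_S$ is the adjacency matrix of some graph $H$ on $m \leq k$ vertices. If $m < k$, then $\lambda_{\max}(C_S) \leq m - 1 \leq k-2$. If $m = k$, then $H$ is missing at least one edge, so $|E(H)| \leq \binom{k}{2} - 1$. To upper bound $\lambda_1 := \lambda_{\max}(H)$ quantitatively, I would use the two identities
\begin{equation}
\sum_{i=1}^k \lambda_i(H) = \operatorname{tr}(H) = 0, \qquad \sum_{i=1}^k \lambda_i(H)^2 = \operatorname{tr}(H^2) = 2|E(H)| \leq k(k-1) - 2.
\end{equation}
Cauchy--Schwarz on the remaining $k-1$ eigenvalues gives $\lambda_1^2 = \bigl(\sum_{i\geq 2} \lambda_i\bigr)^2 \leq (k-1)\sum_{i\geq 2}\lambda_i^2 = (k-1)(2|E(H)| - \lambda_1^2)$, hence $k\lambda_1^2 \leq (k-1)\cdot 2|E(H)| \leq (k-1)(k(k-1)-2)$, which rearranges to $\lambda_1 \leq (k-1)\sqrt{1 - 2/(k(k-1))} \leq k - 1 - 1/k$ using $\sqrt{1-x} \leq 1 - x/2$. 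Combining the two subcases, $\mu(k,C) \leq k - 1 - 1/k$ in the NO case.

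\textbf{Setting the constants.} Putting $\gamma := \bigl(k - 1 - \tfrac{1}{2k}\bigr)/\sqrt{2e}$ and $\delta := \tfrac{1}{2k\sqrt{2e}}$, the YES case gives $\mu(k,C') = (k-1)/\sqrt{2e} = \gamma + \delta$ and the NO case gives $\mu(k,C') \leq (k-1-1/k)/\sqrt{2e} = \gamma - \delta$. Finally $2e \leq n(n-1)$, so $\delta \geq 1/(2k \cdot n) \geq 1/(2n^2) = 1/\operatorname{poly}(n)$, completing the reduction. The main technical step is the quantitative gap in the no-clique case; everything else is bookkeeping, and the Cauchy--Schwarz argument above is the cleanest way I know to extract the $1/k$ separation from the single missing edge.
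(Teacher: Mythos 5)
Your proof is correct for the proposition as literally stated, and its core is the same spectral-gap argument the paper uses: via Corollary~\ref{Batman}, reduce to bounding $\lambda_{\max}$ of induced subgraphs, get exactly $k-1$ on a clique, and a quantitative deficit otherwise. Two differences are worth noting. First, where the paper cites the standard bound $\lambda_{\max}(H)\leq\sqrt{2f(k-1)/k}$ as a known fact and leaves the resulting gap as $k-1-1/\epsilon(k)$ for an unspecified $\epsilon(k)=\operatorname{poly}(k)$, you derive that bound from scratch via $\operatorname{tr}(H)=0$, $\operatorname{tr}(H^2)=2|E(H)|$, and Cauchy--Schwarz, and you extract the explicit constant $1/k$; this makes your write-up self-contained and your $\gamma,\delta$ fully explicit, which is cleaner than the paper's version. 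Second, and this is the substantive caveat: the paper's proof deliberately proves more than the stated proposition. Because the hardness reduction goes through $\WOPT(\Ik,1/\operatorname{poly})$, the YES case there requires a witness \emph{inside} $S(\Ik,-\delta)$ with value $\geq\gamma+\delta$, and the NO case must bound $\ip{C'}{X}$ for all $X$ in the \emph{enlarged} set $S(\Ik,\delta)$, not just $X\in\Ik$. Your maximizer $uu^*$ (the normalized clique indicator) is an extreme point of $\Ik$ and hence lies on its boundary, so it is not in $S(\Ik,-\delta)$ for any $\delta>0$; the paper handles this by mixing it with $I/n$ to form the interior point $D_x$ and absorbing the resulting loss (and the extra $\delta\|E\|_F$ slack in the NO case) into the choice of $\gamma$ and $\delta$. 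So your argument proves the proposition as written but would need that additional perturbation step before it could be plugged into the proof of Theorem~\ref{thm:nphard}; with your explicit $1/k$ gap, adding that layer is routine.
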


\begin{proof}
By Corollary~\ref{Batman}, $\mu(k, C)$ is equal to the maximum eigenvalue of the adjacency matrix of an at-most-$k$-vertex subgraph of $G$. It is a standard fact that the maximum eigenvalue of the adjacency matrix of a $k$-vertex graph $H$ with $f$ edges is upper bounded by $\sqrt{\frac{2f(k-1)}{k}}$, with equality if and only if $H$ is the complete graph (see e.g.~\cite{4138764}). Note that this quantity is equal to $k-1$ if $H$ is the complete graph, and is upper-bounded by
\begin{equation} 
\sqrt{(k-1)^2-2+\frac{2}{k}}\leq k-1-\frac{1}{\epsilon(k)}
\end{equation}
otherwise, for some $\epsilon(k)=1/{\operatorname{poly}}(k)$.

Suppose that $G$ contains a $k$-clique, and let $u\in \mathbb{C}^n$ be a unit vector supported on at most $k$ coordinates for which it holds that $\tr(C' uu^*)=\frac{k-1}{\sqrt{2e}}$ (such a vector exists by extremality, see Lemma~\ref{lem:ex}). Keeping $\delta$ indeterminate for now, let $x=2n \delta$ and let $D_x \in S(\Ik, -\delta)$ as in~\eqref{eq:dx}. Note that
\begin{equation} 
\tr(C'D_x)= (1-\frac{\delta}{n})\frac{(1-x)(k-1)}{e^2} \geq \frac{(1-(2n+1) \delta)(k-1)}{\sqrt{2e}}.
\end{equation} 

Now suppose that $G$ does not contain a $k$-clique. Let $X \in S(\Ik,\delta)$, so $X=F+\delta E$ for some $F \in \Ik$ and Hermitian matrix $E$ of Frobenius norm 1. Then
\begin{equation} 
\tr(C' X) = \tr(C' F) + \delta \tr(C' E) \leq \frac{k-1}{\sqrt{2e}}-\frac{1}{\epsilon \sqrt{2e}}+\delta.
\end{equation}
Let $\gamma=\frac{k-1}{\sqrt{2e}}-\frac{1}{\epsilon \sqrt{2e}}+2 \delta$. Then $\Tr(C' X) \leq \gamma-\delta$. Now, we just need to show that we can choose $\delta = 1/ {\operatorname{poly}}(n)$ small enough so that $\tr(C' D_x) \geq \gamma + \delta$. One can easily verify that the choice
\begin{equation} 
\delta = \frac{1}{\epsilon \sqrt{2e} (3+ \frac{(2n+1)(k-1)}{\sqrt{2e}})}\geq \frac{1}{{\operatorname{poly}}(n)}
\end{equation} 
suffices. 
This completes the proof.
\end{proof} 

\end{document}